\numberwithin{equation}{section}
\newtheorem{theorem}{Theorem}[section]
\newtheorem{lemma}[theorem]{Lemma}
\newtheorem{proposition}[theorem]{Proposition}
\renewenvironment{proof}[1][Proof]{\begin{trivlist}
\item[\hskip \labelsep {\bfseries #1}]}{\qed\end{trivlist}}
\newcommand{\ind}{\mathbf{1}}
\renewcommand{\ge}{\geq}
\renewcommand{\le}{\leq}
\renewcommand{\tilde}{\widetilde}
\renewcommand{\hat}{\widehat}
\DeclareMathSymbol{\leqslant}{\mathalpha}{AMSa}{"36} 
\DeclareMathSymbol{\geqslant}{\mathalpha}{AMSa}{"3E} 
\DeclareMathSymbol{\eset}{\mathalpha}{AMSb}{"3F}     
\renewcommand{\leq}{\;\leqslant\;}                   
\renewcommand{\geq}{\;\geqslant\;}                   
\newcommand{\dd}{\,\text{\rm d}}             
\newcommand{\sumtwo}[2]{\sum_{\substack{#1 \\ #2}}} 
\newcommand{\prodtwo}[2]{\prod_{\substack{#1 \\ #2}}}     
\renewcommand{\u}[1]{\underline{#1}}
\newcommand{\cA}{{\ensuremath{\mathcal A}} }
\newcommand{\cE}{{\ensuremath{\mathcal E}} }
\newcommand{\cT}{{\ensuremath{\mathcal T}} }
\newcommand{\cI}{{\ensuremath{\mathcal I}} }
\newcommand{\cY}{{\ensuremath{\mathcal Y}} }
\newcommand{\cS}{{\ensuremath{\mathcal S}} }
\newcommand{\bP}{{\ensuremath{\mathbf P}} }
\newcommand{\bE}{{\ensuremath{\mathbf E}} }
\newcommand{\bbE}{{\ensuremath{\mathbb E}} }
\newcommand{\bbL}{{\ensuremath{\mathbb L}} }
\newcommand{\bbN}{{\ensuremath{\mathbb N}} }
\newcommand{\bbP}{{\ensuremath{\mathbb P}} }
\newcommand{\bbR}{{\ensuremath{\mathbb R}} }
\newcommand{\bbZ}{{\ensuremath{\mathbb Z}} }
\newcommand{\ga}{\alpha}
\newcommand{\gb}{\beta}
\newcommand{\gd}{\delta}
\newcommand{\gep}{\varepsilon}       
\newcommand{\gD}{\Delta}
\newcommand{\go}{\omega}
\newcommand{\gl}{\lambda}
\newcommand{\gL}{\Lambda}
\def\captionfont@{\footnotesize}
\def\captionheadfont@{\scshape}
\long\def\@makecaption#1#2{%
  \vspace{2mm}
  \setbox\@tempboxa\vbox{\color@setgroup
    \advance\hsize-6pc\noindent
    \captionfont@\captionheadfont@#1\@xp\@ifnotempty\@xp
        {\@cdr#2\@nil}{.\captionfont@\upshape\enspace#2}%
    \unskip\kern-6pc\par
    \global\setbox\@ne\lastbox\color@endgroup}%
  \ifhbox\@ne 
    \setbox\@ne\hbox{\unhbox\@ne\unskip\unskip\unpenalty\unkern}%
  \fi
  \ifdim\wd\@tempboxa=\z@ 
    \setbox\@ne\hbox to\columnwidth{\hss\kern-6pc\box\@ne\hss}%
  \else 
    \setbox\@ne\vbox{\unvbox\@tempboxa\parskip\z@skip
        \noindent\unhbox\@ne\advance\hsize-6pc\par}%
\fi
  \ifnum\@tempcnta<64 
    \addvspace\abovecaptionskip
    \moveright 3pc\box\@ne
  \else 
    \moveright 3pc\box\@ne
    \nobreak
    \vskip\belowcaptionskip
  \fi
\relax
}
\def\writefig#1 #2 #3 {\rlap{\kern #1 truecm
\raise #2 truecm \hbox{#3}}}
\newcommand{\tf}{\mathtt{F}}
\newcommand{\Var}{\mathbb{V}\mathrm{ar}}
\renewcommand{\subset}{\subseteq}
\renewcommand{\phi}{\varphi}
\begin{document}

\title[Free energy of the directed polymer in dimension $1+2$]{The High-temperature behavior for the directed polymer in dimension $1+2$}

\author{Quentin Berger}
\address{LPMA, Universit\'e Pierre et Marie Curie\\
Campus Jussieu, case 188\\
4 place Jussieu, 75252 Paris Cedex 5, France}
\email{quentin.berger@upmc.fr}

\author{Hubert Lacoin}
\address{IMPA,
Estrada Dona Castorina 110,
Rio de Janeiro / Brasil 22460-320}
\email{lacoin@impa.br}

\begin{abstract}
We investigate the high-temperature behavior of the directed polymer model in dimension $1+2$. More precisely we study the difference $\gD \tf(\gb)$ between the quenched and annealed free energies for small values of the inverse temperature $\gb$.
This quantity is associated to localization properties of the polymer trajectories, and is related to the overlap fraction of two replicas.
Adapting recent techniques developed by the authors in the context of the disordered pinning model \cite{cf:BL15}, 
we identify the sharp asymptotic high temperature behavior 
\[\lim_{\gb\to 0} \, \gb^2 \log \gD \tf(\gb) = -\pi \, .\]
 \\[5pt]
  2010 \textit{Mathematics Subject Classification: 82D60, 60K37, 82B44.}
  \\[5pt]
  \textit{Keywords:  Disordered Systems, Directed Polymer, Free Energy, Localization.}
\end{abstract}

\maketitle

\section{Introduction}

The directed polymer model has been introduced by Huse and Henley (in dimension $1+1$) in 1985 \cite{cf:HH}
as an effective model for an interface in the Ising model with impurities. 
It was shortly afterwards generalized to arbitrary dimension $1+d$, where it stands as a model for a stretched polymer interacting with an inhomogeneous solvent. The behavior of the polymer trajectory  depends very on value of $d$, see \cite{cf:CSY_review} for a review.

\medskip

In dimension $1+3$ and higher there is a phase transition between a high temperature diffusive phase 
 for which there is a Brownian scaling \cite{cf:B,cf:CY}, and a localized phase where the polymer tends to pin on 
 a few narrow corridors were the environment is more favorable (see \cite{cf:CH, cf:CSY} for rigorous evidence of the phenomenon).
 
\medskip

In dimension $1+1$, the polymer is localized at every temperature. Moreover
it belongs to the KPZ universality class, which has been the object of intense studies in the recent year (for a connection between directed polymer and the KPZ equation, see e.g.~\cite{cf:AKQ} and references therein).

\medskip

The dimension $1+2$ is critical for the model. 
It is known that localization occurs at all temperature (see \cite{cf:CSY,cf:L}), but the difference between the quenched and annealed free-energy of the system, which is a quantitative indicator of localization, grows very slowly with the inverse-temperature.
The aim of this paper is to obtain sharp information on the asymptotic behavior of this free-energy difference.

\subsection{Directed polymer in random environment}

We let $\bP$ denote the law of $S=(S_n)_{n\geq 0}$ the symmetric nearest-neighbor random walk on $\bbZ^d$, starting from $0$, 
and whose increment are i.i.d.\ with law 
\begin{equation}
\bP(S_1=x)=\frac{1}{2d}\ind_{\{|x|=1\}},
\end{equation}
where $|\cdot|$ is the $l_1$ norm.

Let $\go=(\go_{i,x})_{i\geq0, x\in\bbZ^d}$ be a field of i.i.d.\ random variables with law $\bbP$, which are centered and have unit variance, $\bbE[\go_{i,x}]=0$ and $\bbE[(\go_{i,x})^2]=1$.
We also assume that they have a finite exponential moment in a neighborhood of zero, i.e. that for some positive $c$,
\begin{equation}\label{expomom}
 \forall \gb\in [-c,c], \quad \gl(\gb):=\log \bbE[e^{\gl \go_{i,x}}]<+\infty.
\end{equation}

\medskip

Given the random environment $\go$ and the inverse temperature $\gb>0$, we define the following Gibbs transformation of the law $\bP$ of the random walk up to length $N$
\begin{equation}
\frac{\dd \bP_{N}^{\gb,\go}}{\dd \bP} (S) := \frac{1}{Z_{N}^{\gb,\go}}\, \exp\left( \sum_{n=1}^N \gb\go_{n,S_n}\right) \, ,
\end{equation}
 where $Z_{N}^{\gb,\go}$ is the partition function
\begin{equation}
Z_{N}^{\gb,\go} = \bE \left[ \exp\left(  \sum_{n=1}^N \gb\go_{n,S_n}\right)  \right] \, .
\end{equation}
The free energy (or pressure) of the system is defined by 
\begin{equation}
\tf (\gb) := \lim_{N\to\infty} \frac1 N \log Z_{N}^{\gb,\go}.
\end{equation}
The limit is known to exist and be $\bbP$-a.s.\ constant, see \cite[Prop.~2.5]{cf:CSY}.
It is also known that the converge holds in $\bbL_1$ and hence that 
\begin{equation}
\tf (\gb) =\lim_{N\to\infty} \frac1 N \bbE \left[ \log Z_{N}^{\gb,\go}\right].
\end{equation}
An easy upper-bound on $\tf(\gb)$ it is given by Jensen's inequality 
\begin{equation}
\tf (\gb) \le \lim_{N\to\infty} \frac1 N \log \bbE\left[ Z_{N}^{\gb,\go} \right]= \gl(\gb).
\end{equation}
We refer to this upper bound as the \textit{annealed} free-energy while $\tf(\gb)$ is the \textit{quenched} one.
Knowing whether or not this inequality is sharp gives information on the localization of the trajectory.
Heuristically  $\tf (\gb)<\gl(\gb)$ corresponds to localization of the trajectories under $\bP_{N}^{\gb,\go}$ 
around favorite corridors where $\go$ is favorable, whereas
$\tf (\gb)=\gl(\gb)$ implies diffusivity of $S$. 
This has been largely put on rigorous ground both for the diffusive case \cite{cf:B,cf:CY} and the localized one \cite{cf:CH,cf:CSY}.
Moreover, it is known that $\tf (\gb)=\gl(\gb)$ for small value of $\gb$ when $d\ge 3$ \cite{cf:B} while the inequality is always strict for $d=1$ \cite{cf:CV} and $d=2$ \cite{cf:L}.

\medskip

When $\tf (\gb)<\gl(\gb)$, the difference $\Delta \tf(\gb)= \gl(\gb)-\tf(\gb)>0$ gives some indication on how much localized the trajectories are: Carmona and Hu \cite{cf:CH} (and later Comets Shiga and Yoshida)gave an explicit link between $\Delta \tf(\gb)$ and the overlap fraction of two replicas, namely in our context
\[\gD\tf(\gb) = \gl(\gb)\lim_{N\to\infty} \frac{1}{N} \sum_{k=1}^N (\bP_{k-1}^{\gb,\go} )^{\otimes 2}(S_k^{(1)}=S_k^{(2)}) \quad \bbP-a.s. \]

\smallskip
In dimension $1$, it is known that $\Delta \tf(\gb)$ scales like $\gb^4$ (see \cite{cf:L, cf:Watbled, cf:AY}), and it is conjectured \cite{cf:SS1, cf:SS2} that 
\begin{equation}\label{cf:conj}
\lim_{\gb\to 0} \gb^{-4} \Delta \tf(\gb)= \frac{1}{24} \, .
\end{equation}
The exponent $4$ is very much related to the $\gb=N^{-1/4}$ scaling which is required to obtain a non-trivial intermediate disorder regime limit, see \cite{cf:AKQ}.

\medskip

\subsection{Main result}

In this paper we focus on the case of $d=2$, the critical dimension for directed polymers, for which 
the renormalized free energy $\Delta \tf(\gb)$ vanishes faster than any power of $\gb$.
In \cite{cf:L}, the author showed the existence of a constant $c$ such that for $\gb\leq 1$,
\[ -c^{-1} \gb^{-4} \leq \log \gD \tf(\gb) \leq -c \, \gb^{-2}.\]
In \cite{cf:Naka}, the lower bound was improved to $\log \gD \tf(\gb) \geq - c_{\gep} \gb^{-(2+\gep)}$ for any $\gep>0$.

\medskip

Our main theorem  significantly improves over previous results and identifies the sharp critical behavior of $\Delta \tf(\gb)$.

\begin{theorem}
\label{thm:main}
For $d=2$,
\[\lim_{\gb\to 0}\, \gb^2 \log \Delta \tf(\gb) = - \pi \, .\]
\end{theorem}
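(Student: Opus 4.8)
The plan is to prove the upper and lower bounds on $\log\Delta\tf(\gb)$ separately, in both cases passing through a fractional-moment / coarse-graining scheme analogous to the one developed for the pinning model in \cite{cf:BL15}. The central heuristic is that for the $2$-dimensional walk the relevant correlation length is $\xi(\gb)=\exp(\pi/\gb^2)$ (up to subleading corrections): on scales below $\xi(\gb)$ the polymer essentially behaves like the annealed system, while on scales much larger than $\xi(\gb)$ disorder takes over and forces localization. The constant $\pi$ will come from the asymptotics of the Green's function of the simple random walk on $\Z^2$, namely $\sum_{n=1}^{N}\bP(S_n=0)\sim \tfrac1\pi\log N$, since the expected number of collisions of two independent walks up to time $N$ is $\sum_{n\le N}\bP(S_{2n}=0)\sim\tfrac1\pi\log N$, and this $\tfrac1\pi$ inverts to give the exponent $-\pi$.

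\textbf{Lower bound $\liminf_{\gb\to0}\gb^2\log\Delta\tf(\gb)\ge-\pi$.}
For this direction I would produce an explicit localization strategy for the polymer at a favorable region and lower-bound the free-energy difference by a change-of-measure (second moment / conditional second moment) argument. Concretely, fix a large length scale $N=N(\gb)$ with $\log N\approx(\pi-\gep)/\gb^2$; on this scale one shows that the normalized partition function $Z_N^{\gb,\go}/\E[Z_N^{\gb,\go}]$ still has an $\bbL^2$-type bound that degenerates only logarithmically, because the replica overlap $\sum_{n\le N}\bP(S_n^{(1)}=S_n^{(2)})\approx\tfrac1\pi\log N\approx(\pi-\gep)/(\pi\gb^2)$ keeps $\gb^2\sum_{n\le N}\bP(S_n^{(1)}=S_n^{(2)})<1$; this lets one restart the polymer every $N$ steps and harvest a free-energy gain per block of order $N^{-1}$ times a constant, giving $\Delta\tf(\gb)\ge c\,N^{-1}=\exp(-(\pi-\gep)/\gb^2+o(\gb^{-2}))$. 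Equivalently, using the replica representation quoted in the introduction, it suffices to lower-bound the time-averaged collision probability under $\bP_k^{\gb,\go}$; one shows this average is at least $c/N$ by a direct first-moment estimate on the collisions combined with the fact that the partition function on blocks of size $N$ is bounded away from $0$ in probability.

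\textbf{Upper bound $\limsup_{\gb\to0}\gb^2\log\Delta\tf(\gb)\le-\pi$.}
This is the hard direction and the main obstacle; it is exactly where the coarse-graining machinery of \cite{cf:BL15} has to be adapted from the pinning setting to the polymer. The strategy is a fractional-moment bound: for a suitable $\gt\in(0,1)$ one estimates $\bbE[(Z_N^{\gb,\go})^\gt]$ and shows it decays exponentially in $N$ once $N\gg\xi(\gb)=\exp((\pi+\gep)/\gb^2)$, which forces $\tf(\gb)<\gl(\gb)$ with a gap of order $\xi(\gb)^{-1}$. One writes $Z_N$ as a sum over coarse-grained trajectories — partition $\{1,\dots,N\}$ into blocks of length $\ell\approx\xi(\gb)$ and over each block record the (coarse) spatial location of the walk — applies $(\sum a_i)^\gt\le\sum a_i^\gt$, and then needs a per-block fractional-moment estimate showing each block contributes a factor $<1$. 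The crux is a change of measure on each block that tilts the environment to make the block partition function small while costing only a controlled relative-entropy price; the natural choice, following \cite{cf:BL15}, is a tilt biased toward atypically small values of the local partition function, or equivalently toward configurations of $\go$ that penalize the collision pattern the polymer would exploit. Making the entropy cost negligible compared to the gain — and in particular pinning down that the threshold scale is $\exp(\pi/\gb^2)$ and not $\exp(c/\gb^2)$ for some other constant — requires the sharp Green's-function asymptotics above together with a careful second-moment computation of the (tilted) block partition function. I expect the bulk of the technical work, and the only genuinely delicate point, to be this block estimate: controlling the multi-body (higher-replica) corrections in the tilted second moment so that the constant $\pi$ is not degraded, and handling the mild non-independence across blocks coming from the walk's endpoint, which is dealt with by summing over endpoint locations and absorbing the resulting polynomial-in-$\ell$ factors into the $o(\gb^{-2})$ error.
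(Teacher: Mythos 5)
Your heuristic framework is right --- the threshold scale $\exp(\pi/\gb^2)$, the Green's-function origin of the constant, and the two tools (fractional moment with coarse-graining and change of measure; second moment at the matching scale) are exactly those used by the paper. But you have assigned the two tools to the wrong halves of the theorem, and the swap exposes a real gap. Read your own ``upper bound'' paragraph: you say the fractional-moment estimate ``forces $\tf(\gb)<\gl(\gb)$ with a gap of order $\xi(\gb)^{-1}$.'' That statement is $\gD\tf(\gb)\ge \xi(\gb)^{-1}$, a \emph{lower} bound on $\gD\tf$, i.e.\ it proves $\liminf\gb^2\log\gD\tf\ge-\pi$, not $\limsup\le-\pi$ --- which is indeed what the paper's fractional-moment machinery is for (Section~2, with cells of length $\ell=\exp((1+2\gep)\pi/\gb^2)$ and a $q$-linear change of measure with $q=(\log\log\ell)^2$). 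Meanwhile your ``lower bound'' paragraph argues that a bounded second moment of $\hat Z_N$ at $\log N\approx(\pi-\gep)/\gb^2$ yields, by ``restarting every $N$ steps,'' the estimate $\gD\tf\ge c\exp(-(\pi-\gep)/\gb^2)$. This cannot be right: it would give $\liminf\gb^2\log\gD\tf\ge-(\pi-\gep)>-\pi$, contradicting the theorem (and the other half of your own argument). The mechanism is inverted: superadditivity of $\bbE[\log Z_N]$ gives $\gD\tf(\gb)\le-\tfrac1N\bbE[\log\hat Z_N^{\gb,\go}]$, an \emph{upper} bound on $\gD\tf$, and a bounded second moment is evidence that $\hat Z_N$ stays of order $1$, pushing $\gD\tf$ \emph{down}, not up; there is no subadditive statement to reverse this.

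Once relabeled, the second-moment computation is the correct starting point for the $\limsup$ half, but it does not close the argument by itself. A bounded second moment of $\hat Z_{N_{\gb,\gep}}$ only controls the probability that $\hat Z_{N_{\gb,\gep}}$ is not too small (Paley--Zygmund); it gives no control over $\bbE[\log\hat Z_{N_{\gb,\gep}}]$, since $\log\hat Z_N$ is unbounded below. The paper bridges this with a convex concentration inequality for $\log\hat Z_N$ in the style of Caravenna--Toninelli--Torri \cite{cf:CTT}, where the Lipschitz constant of $\log\hat Z_N$ is controlled only on the good event $\{\hat Z_N\ge 1/2\}$; the naive martingale bound $\Var(\log\hat Z_N)\le CN$ would lose a factor of~$2$ in the final exponent, which is exactly why the refined inequality is essential. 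Your proposal omits the concentration step entirely, so the $\limsup$ direction is not closed as written.
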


\subsection{Strategy of the proof and organization of the paper}
\label{sec:aboutproof}

Our result improves both existing lower and upper bound on $\gD \tf(\gb)$.
The main part of the work concerns the lower bound. 

To derive it we us a by now standard fractional moment/coarse-graining procedure, 
employed in the context of pinning models \cite{cf:DGLT,cf:GLT11} recently enhanced in \cite{cf:BL15}, and adapted for the directed polymer model 
in \cite{cf:L,cf:Naka}. 
Here, we relie in particular on new ideas that have been introduced in \cite{cf:BL15} to obtain optimal bounds on the critical point shift in disordered pinning.
Let us sketch briefly how the different parts of the proof articulate.

\medskip

First we realize that to control the free-energy it is sufficient to have a control on $\bbE[\sqrt{\hat Z^{\gb,\go}_N}]$
which is easier to work with than the $\log$ partition function.
Then, to obtain the desired upper bound, we proceed in three steps which we introduce here in a rather informal manner:
\begin{itemize}
\item [(i)] We perform a coarse-graining of the system, dividing it into cells of length $\ell$ and width $\sqrt{\ell}$ 
(to fit with the random walk diffusive scaling) where $\ell$ depends on $\gb$ and gets very large when $\gb$ gets small.
We choose $\ell$ to be roughly the inverse of $\gD\tf(\gb)$ or rather the inverse of the bound we would like to prove for it.
The idea behind this procedure is to ``factorize'' the partition function of a system of size much larger than $\ell$
and isolate the contribution of each cell.
Then if one is able to show that partition function ``restricted to a cell'' 
is small, we want to use the factorization procedure to deduce a bound on the free energy.

\item [(ii)] The coarse grained trajectory is defined as the projection of the original trajectory $S$ on this rougher lattice 
(we give a more proper definition in the core of the paper).
We decompose the partition function of a system whose size is a multiple of $\ell$ 
by isolating the contribution of each coarse grained trajectory.
By using the inequality $\sqrt{\sum a_i} \leq \sum \sqrt{a_i}$ valid for any collection of positive $a_i$’s,  
we reduce ourselves to estimate the square root moment of partition functions restricted to a single coarse grained trajectory.

\item [(iii)] We estimate these square root contribution of coarse grained trajectories by performing a ``change of measure'' 
which makes the environment $\go$ less favorable in the visited cells.
The way we choose this change of measure is quite elaborate and is based on a multilinear form of the $\go_{n,x}$ in the cell. 
It is described in details in Section \ref{sec:chgmeas}.
\end{itemize}

The steps (i) and (ii) are identical to those performed in \cite{cf:L} and are presented in Section \ref{sec:lower},
however the change of measure is significantly improved with respect to that of \cite{cf:L} and builds on the innovations introduced in \cite{cf:BL15}. 
In Section \ref{sec:keylemma}, we prove the technical estimates needed to control the effect of the change of measure procedure.

\medskip

The upper bound is obtained in Section \ref{sec:upper} thanks to an estimate on the second moment of the partition function, 
together with a concentration argument of $\log Z_{N}^{\gb,\go}$ around its mean inspired by \cite{cf:CTT}.

\subsection{Generalization of the result?}

The techniques described in Section \ref{sec:aboutproof} could be adapted to a more general context. Indeed, one might consider the model in which the random walk $S$ is not the simple symmetric random walk on $\bbZ^d$, but belongs to the domain of attraction of an $\ga$-stable law with $\ga\in(0,2]$, see \cite{cf:Comets}. Let us consider the case of the dimension $1+1$: it has been showed that weak disorder holds for $\gb$ small enough when $\ga\in(0,1)$, see \cite{cf:Comets}, and that strong disorder holds for any $\gb>0$ when $\ga \in (1,2)$, see \cite{cf:MTT} (in a continuous setting). A similar question has been studied in \cite{cf:CSZ1}, where a \emph{disordered} scaling limit can be constructed whenever $\ga\in (1,2]$. The case $\ga=1$ is marginal, as it is the case of the simple random walk in dimension $1+2$, and it is likely that it could be dealt with the same methods as presented here: one should be able to obtain a necessary and sufficient condition for the existence of a weak disorder phase (note that this is related to the notion of disorder irrelevance, studied in \cite{cf:BL15}). In general, localization should occur for all $\gb>0$ if and only if $S$ is recurrent, and the growth of the excess free energy $\gD \tf(\gb)$ should be related to that of the mean intersection local time up to time $N$, cf.~\eqref{def:D} (analogously to \cite[Prop.~6.1-7.1]{cf:BL15}).

\subsection{Some notations}

We write  
\begin{equation}
\label{renormalized}
 \hat Z_{N}^{\gb,\go}:=e^{-N\gl(\gb)}  Z_{N}^{\gb,\go}
\end{equation}
for the renormalized partition function.
We introduce the intersection local time up to time $N$,
\begin{equation}
L_N(S^{(1)},S^{(2)}) = \sum_{t=1}^N \ind_{\{ S^{(1)}_t = S^{(2)}_t\}}\, .
\end{equation}
For $t\in\bbN$ and $x\in\bbZ^2$, we write
\[p(t,x) := \bP(S_t=x)\, ,\]
for the kernel of the symmetric simple random walk on $\bbZ^2$. 
A central quantity for the model is the mean intersection local time up to $N$,
\begin{equation}
\label{def:D}
D(N) : = \sum_{t=1}^N \bP(S^{(1)}_t = S^{(2)}_t ) =\sum_{t=1}^N p(2t,0) \ \ \stackrel{N\to\infty}{\sim } \ \ \frac1\pi  \log N \, .
\end{equation}
Note that $D(N)$ can also be written as $\sum\limits_{t=1}^N \sum\limits_{x\in\bbZ^2} p(t,x)^2$.

\section{Lower-bound}
\label{sec:lower}

\subsection{Fractional moment and coarse-graining}

To bound the free energy from above we have to estimate the expected value of $\log Z_{N}^{\gb,\go}$.
Using Jensen's inequality, we can reduce  the problem into having to estimate only the square root, which turns out to be more convenient. 
We have 
\begin{equation}
\bbE\big[ \log \hat Z_{N}^{\gb,\go} \big]= 2 \bbE\big[ \log \sqrt{ \hat Z_{N}^{\gb,\go}} \big]\le 2\log \bbE\big[ \sqrt{ \hat Z_{N}^{\gb,\go}} \big],
\end{equation}
and hence 
\begin{equation}\label{paietaliminf}
\Delta \tf (\gb) \ge  - \liminf_{N\to\infty} \, \frac{2}{N} \log \bbE \big[ \sqrt{ \hat Z_N^{\gb,\go}}\big] \, .
\end{equation}
We choose to split the system into ``cells'' of length $\ell$ which we choose to be equal to
\begin{equation}\label{def:ell}
\ell = \ell_{\gb,\gep} := \exp\left( (1+2\gep)\, \frac{\pi} {\gb^2}  \right) \,  ,
\end{equation}
where $\gep>0$ is a parameter (fixed for the rest of the proof), which we choose to be small.
 The reason for this choice of coarse-graining length will appear later in the proof.
We consider a system whose length is a multiple of $\ell$: $N=m\ell$, $m\in\bbN$.
For every $y\in\bbZ^2$, we define a window centered at $y\sqrt{\ell}$ (we assume for simplicity that 
$\sqrt{\ell}$ is an even integer), and of width $\sqrt{\ell}$:
\[\Lambda_y:= y\sqrt{\ell} + (-\tfrac12 \sqrt{\ell},\tfrac12 \sqrt{\ell}]^2\cap \bbZ^2 \, ,\]
Note that $\gL_y$ contains $\ell$ points. 
Given any $\cY = (y_1,\ldots, y_m) \in (\bbZ^2)^m$, we define the event
\begin{equation}
\cE_{\cY} : = \big\{  \forall i\in\{1,\ldots, m\}  , \, S_{i\ell} \in \Lambda_{y_i}  \big\}\, .
\end{equation}
If $S \in\cE_\cY$, then $\cY$ is a coarse-grained version of the trajectory of $S$.
The width of the cells is chosen to match the scaling of the random-walk.

\medskip

We decompose  $\hat Z_{N}^{\gb,\go}$ according to the  contribution of the different coarse-grained trajectories
\begin{equation}
\label{eq:fractional}
\hat Z_N^{\gb,\go} = \hat Z_{m\ell}^{\gb,\go} =
\sum_{\cY  \in (\bbZ^2)^m} \bE \left[\exp\left(\sum_{n=1}^N\gb\go_{n,S_n}-\gl(\gb)\right)\ind_{\{S \in \cE_{\cY} \}}\right]=:
 \sum_{\cY  \in (\bbZ^2)^m} Z_{\cY}.
\end{equation}
Using the inequality $(\sum_{i\in \cI}  a_i)^{1/2}\leq \sum_{i\in \cI} a_i^{1/2}$, valid for any family of non-negative $a_i$'s, we obtain
\begin{equation}\label{paietonfrac}
\bbE\left[ (\hat Z_N^{\gb,\go} )^{1/2}\right] \leq \sum_{\cY  \in (\bbZ^2)^m} \bbE \left[ (Z_{\cY})^{1/2} \right]\, .
\end{equation}
We are therefore left with estimating $\bbE \left[ (Z_{\cY})^{1/2} \right]$ for every coarse-grained trajectory $\cY$.
As a consequence of \eqref{paietaliminf} and \eqref{paietonfrac} we have 
\begin{equation}
\gD\tf (\gb)\ge - \liminf_{m\to \infty} \frac{2}{m \ell} \log \left( \sum_{\cY  \in (\bbZ^2)^m} \bbE \left[ (Z_{\cY})^{1/2} \right] \right),
\end{equation}
We obtain the lower bound in Theorem \ref{thm:main} as a consequence of the following result
\begin{proposition}
\label{prop:fracmoment}
For any $\gep>0$, there exists some $\gb_\gep>0$ such that, for every $\gb\in(0, \gb_\gep)$, and $m\ge 1$
\[ \sum_{\cY  \in (\bbZ^2)^m} \bbE \left[ (Z_{\cY})^{1/2} \right] \leq 2^{-m} \, .\]
\end{proposition}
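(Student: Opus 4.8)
The plan is to prove the bound $\bbE[(Z_\cY)^{1/2}] \le \rho^{|\cY|_*}$ for some small factor $\rho = \rho(\gb,\gep)$ attached to each "effective" coarse-grained cell, and then sum over all $\cY$. First I would make the combinatorial reduction: since each consecutive step $y_i \to y_{i+1}$ of the coarse-grained trajectory is constrained (the random walk travels distance $\sqrt{\ell}$ in time $\ell$, so typically $|y_{i+1}-y_i|$ is $O(1)$, and the Gaussian tail of $p(\ell, \cdot)$ controls the rare large jumps), the number of coarse-grained trajectories with a given "length" grows only geometrically. Concretely, one estimates $\bbE[(Z_\cY)^{1/2}]$ by inserting the random-walk transition probabilities and using that $Z_\cY$ factorizes over cells up to the entropic cost $\prod_i p(\ell, y_i\sqrt\ell - y_{i-1}\sqrt\ell)$; the square root of this product, summed over $\cY$, is a convergent geometric-type sum provided each cell contributes a factor that is small enough. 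So it suffices to show that the per-cell square-root partition function contribution is bounded by, say, $1/100$ (or any fixed small constant beating the geometric branching factor coming from the free random walk).

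The heart of the matter is step (iii): bounding the square-root moment of the partition function restricted to a single cell. Here I would perform a change of measure on the disorder variables $\{\go_{n,x}\}$ indexed by $(n,x)$ in the slab of length $\ell$ over the window $\Lambda_y$ — tilting $\bbP$ by a density $g_\ell(\go)$ that makes the environment less favorable precisely along the corridors the polymer would want to exploit. Following the mechanism of \cite{cf:BL15}, rather than a crude single-site or Gaussian tilt, $g_\ell$ should be built from a \emph{multilinear} functional of the $\go_{n,x}$, of the schematic form $\exp(-\gep \sum p(n,x)p(n,y)\go_{n,x}\go_{n,y}\cdots)$ suitably normalized — i.e. a tilt sensitive to the \emph{pairwise intersection structure} encoded by $D(\ell) \sim \frac1\pi \log \ell$. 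By Cauchy–Schwarz, $\bbE[(Z_\cY)^{1/2}] \le \bbE[g_\ell^{-1}]^{1/2}\, \bbE[g_\ell Z_\cY]^{1/2}$ (applied cell by cell); the first factor is an entropy/variance cost of the change of measure, controlled by the $\bbL^2$ norm of the multilinear form, while the second is the tilted first moment, where the whole point of the construction is that the tilt kills the dominant contribution, reducing $\bbE[g_\ell Z_\cY]$ from something of order $1$ to something small. The bookkeeping here is where \eqref{def:ell} enters: with $\ell = \exp((1+2\gep)\pi/\gb^2)$ one has $\gb^2 D(\ell) \sim (1+2\gep)$, which is exactly the threshold making the change-of-measure cost and gain balance so that the product beats $1$.

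I would organize it as: (a) state and prove the single-cell estimate as the key lemma (deferred to Section \ref{sec:keylemma} in the paper's structure), of the form $\bbE[(Z_\cY)^{1/2}] \le C \prod_{i=1}^m [\,\bar p(y_i - y_{i-1})\,]^{1/2} \delta_\gb$ where $\bar p$ is a summable bound on the coarse-grained step kernel and $\delta_\gb \to 0$; (b) sum over $\cY$: $\sum_\cY \prod_i \bar p(y_i-y_{i-1})^{1/2} \le (\sum_x \bar p(x)^{1/2})^m =: A^m$, a finite constant to the $m$; (c) choose $\gb_\gep$ small enough that $C \cdot A \cdot \delta_\gb \le 1/2$, giving the claimed $2^{-m}$. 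The main obstacle — and the only genuinely hard part — is constructing the multilinear change of measure and proving the two moment bounds on it: showing the second-moment cost of $g_\ell$ stays bounded (which requires careful control of the combinatorics of the multilinear form against the Green's-function-type weights $p(t,x)$, exploiting that the simple random walk in $d=2$ is "barely" recurrent so $D(\ell)$ grows only logarithmically), and showing the tilted first moment is genuinely small (requiring that the tilt correlates with the leading intersection-local-time term in the expansion of $Z_\cY$). Once the key lemma is in hand, steps (b)–(c) are routine, and the liminf inequality preceding the proposition then yields $\gD\tf(\gb) \ge -\frac{2}{\ell}\log 2 \ge -\exp(-(1+2\gep)\pi/\gb^2)$ up to constants, which upon taking $\gb \to 0$ and then $\gep \to 0$ gives $\liminf \gb^2 \log \gD\tf(\gb) \ge -\pi$.
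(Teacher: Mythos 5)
Your high-level architecture (fractional moment plus change of measure plus coarse-graining, then sum the square-rooted kernel over $\cY$) is the same as the paper's, and the bookkeeping in your steps (b)--(c) is fine. But the heart of the argument, which you defer to ``the key lemma,'' contains two genuine gaps that the paper's construction is specifically designed to close, and without them your plan would only reproduce the weaker $-c\gb^{-4}$ bound of \cite{cf:L} rather than the sharp constant $\pi$.

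First, the degree of the multilinear form must grow with $\ell$. Your schematic tilt $\exp(-\gep\sum p\,p\,\go\go\cdots)$ and the appeal to ``pairwise intersection structure encoded by $D(\ell)$'' suggest a degree-$2$ (or fixed finite degree) functional, which is essentially the change of measure in \cite{cf:L}. With a fixed degree, the competition between the entropy cost $\bbE[g^{-1}]$ and the gain in $\bbE^S[g]$ can only buy you a power of $\gb$, not the extra logarithm needed to push the coarse-graining length all the way up to $\exp((1+2\gep)\pi/\gb^2)$. The paper takes $X$ to be a $q$-linear form with $q = (\log\log\ell)^2 \to \infty$ (see \eqref{eq:defq}--\eqref{def:X}); it is precisely the $(1+\gep^2)^q$ versus $(1+\gep^3)^q$ separation in Lemmas \ref{lem:EX} and \ref{lem:VarX}, available only because $q\to\infty$, that makes the Chebychev step in Lemma \ref{lem:EyESg} give an arbitrarily small $\eta$. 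Second, the functional form of $g$: an exponential tilt $g = e^{-\gep X}$ would require controlling $\bbE[g^{-1}] = \bbE[e^{\gep X}]$, an exponential moment of a high-degree polynomial in non-Gaussian $\go$'s, which is not controlled by the $L^2$ norm of $X$ as you suggest. The paper instead uses the bounded truncation $g = \exp(-K\ind_{\{X\ge e^{K^2}\}})$ (see \eqref{def:g}), for which $\bbE[g^{-1}]\le 2$ per cell follows trivially from Chebychev on $X$. Finally, a smaller point: your ``factorizes over cells up to the entropic cost'' is not quite an inequality; since the tilted one-cell contributions depend on the entry point of the walk, the paper decouples by taking a max over entry points in $\gL_{y_{i-1}}$ (see \eqref{reduce1block}), which is what makes the per-cell bound $\bE_x[\bbE^S[g]]\le\eta$ uniform in $x$ and the product structure rigorous.
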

This statement implies indeed  that $\Delta \tf(\gb)\ge (2\log 2)\, \ell^{-1}$,
and thus from the definition of $\ell$ \eqref{def:ell}, for any arbitrary $\gep>0$, we have
\begin{equation}
\liminf_{\gb\to \infty} \gb^2 \log \Delta  \tf(\gb)\ge - (1+2\gep) \pi.
\end{equation}

\subsection{The change of measure argument}

Let $g_{\cY}(\go)$  be a positive function, that can be interpreted as a probability density if renormalized. Using the  Cauchy-Schwarz inequality, we have 
\begin{equation}
\label{eq:CS}
\left( \bbE \left[ (Z_{\cY})^{1/2} \right]\right)^2 \leq \bbE \big[ g_{\cY}(\go)^{-1} \big] \, \bbE \big[ g_{\cY}(\go)\, Z_{\cY} \big] \, .
\end{equation}
The idea is then to choose $g_{\cY}(\go)$ such that $\bbE \big[ g_{\cY}(\go)^{-1} \big]$ is not much larger than one, 
but that lowers significantly the expectation of $Z_{\cY}$. Hence we want $g_{\cY}$ to be of order $1$ for ``typical environments'', 
but much smaller for atypical $\go$ which results in high values of  $Z_{\cY}$ 
(the underlying idea being that these are the ones who carry the mass in the expectation). 

\medskip

As we want to affect the partition function restricted to paths in $\cE_{\cY}$, we choose a change of measure $g_{\cY}(\go)$ which only affects 
the environment in a corridor which is centered on the location of the paths.
To make certain that most trajectories in $\cE_{\cY}$ are affected by the change, we apply it in a region which is slightly wider than $\sqrt{\ell}$: 
For any $y\in\bbZ^2$, let us define 
\[\tilde \gL_{y}:= y\sqrt{\ell} + [-R\sqrt{\ell},R\sqrt{\ell}]^2\cap \bbZ^2\]
where $R$ is chosen sufficiently large (see the proof of Lemma \ref{lem:EX}). Note that $\tilde \gL_y$ contains $4R^2 \ell$ points.

We choose  $g_{\cY}$ to be a function of $\go$ restricted to $\bigcup_{i=1}^n B_{(i,y_{i-1})}$ for $i=1, \dots, m$ where 
\begin{equation}
B_{(i,y)} :=  [(i-1)\ell+1,i\ell] \times \tilde\gL_{y} \, ,
\end{equation}
Because of our coarse-graining, it is natural that we choose  $g_{\cY}$ as a product of functions of the environment restricted to one cell
$(\go_{n,x})_{(n,x)\in B_{(i,y_{i-1})}}$.

\medskip

We let $X(\go)$ be a function of  $(\go_{n,x})_{(n,x)\in B_{(1,0)}}$ which we specify in the next section and
which satisfies
\begin{equation}
\bbE[X(\go)]=0, \quad \bbE [(X(\go))^2]\leq 1.
\end{equation} 
We define  $X^{(i,y)}$ as the space-time ``translation'' of $X$
\begin{equation}\label{rer}
X^{(i,y)}(\go):= X(\theta^{(i-1)\ell,\sqrt{\ell}y} \go),
\end{equation}
where $\theta^{a,b}$ is the shift operator: $(\theta^{a,b} \go)_{t,x} := \go_{t+a,x+b}$.
Finally, given  $K>0$ which is chosen large enough, we set
\begin{equation}
\label{def:g}
\begin{split}
g_{(i,y)}(\go) & := \exp\Big( - K \, \ind_{\{X^{(i,y)}(\go) \geq e^{K^2}\}}  \Big) \, , \\
\quad  g_{\cY}(\go)  & := \prod_{i=1}^m g_{(i,y_{i-1})}(\go)
\end{split}
\end{equation}
With this definition, and provided that $K$ is large, we have
\[
\bbE \big[ g_{(i,y)}(\go)^{-1} \big] = 1 + (e^{K}-1) \bbP \big( X^{(i,y)}(\go) \geq e^{K^2} \big) \leq 1 + (e^{K}-1)e^{-2 K^2}  \leq 2 \, ,
\]
and hence by independence of the $g_{(i,y_{i-1})}$, $i=1,\dots,m$  (which are functions of $\go$ on disjoint sets by construction), we have
\begin{equation}
\label{eq:CS1}
\bbE \big[ g_{\cY}(\go)^{-1} \big]  \leq 2^{m} \, .
\end{equation}
The main task is then to estimate the effect on $Z_{\cY}$ of the multiplication by $g_{\cY}$. We have
\begin{equation}\label{coucou}
\bbE \big[ g_{\cY}(\go)\, Z_{\cY} \big] = 
\bE\left[ \bbE \left[ g_{\cY}(\go) e^{\sum_{n=1}^N \gb\go_{n,S_n} -\gl(\gb)}\right] \, \ind_{\cE_\cY}\right]
\end{equation}
Note that for a fixed trajectory $S$, the measure $\bbP^S$ on $\go$ defined by 
\begin{equation}
\label{def:PS}
\frac{\dd \bbP^S}{\dd \bbP}(\go) := e^{\sum_{n=1}^N \gb\go_{n,S_n} -\gl(\gb)} \, ,
\end{equation}
is a probability measure.
Under $\bbP^S$, $\go$ is still a field of independent random variables 
(in particular the $g_{(i,y_{i-1})}(\go)$, $i=1,\dots,m$ are still independent), but there are not identically distributed: 
the law of $(\go_{n,S_n})_{1\leq n\leq N}$ has been exponentially tilted. 
The variance and expectation of $\go_{n,x}$ for $1\le n\le N$ are then given by 
\begin{equation}
\label{newExpVar}
\bbE^S[\go_{n,x}] = \gl'(\gb) \ind_{\{S_n=x\}} \, ,\qquad \Var^S(\go_{n,x}) = 1 + (\gl''(\gb)-1) \ind_{\{S_n=x\}} \, 
\end{equation}
where $\gl'$ and $\gl''$ denote the two first derivative of $\gl$.
In what follows we will always choose $\gb$ sufficiently small so that 
\begin{equation}\label{eq:boundsonevar}
 \left|\frac {\gl'(\gb)-\gb}{\gb}\right| \le \gep^3  \quad \text{ and }  \quad \gl''(\gb)\le 1+\frac{\gep^3}{2}.
\end{equation}

With this newly defined measure, the identity \eqref{coucou} can be rewritten as follows
\begin{equation}
\bbE \big[ g_{\cY}(\go)\, Z_{\cY} \big] = 
\bE\left[ \bbE^S \left[ g_{\cY}(\go) \right] \, \ind_{\cE_\cY}\right]=\bE\left[ \prod_{i=1}^m \bbE^S \left[ g_{(i,y_{i-1})}(\go) \right] \, \ind_{\cE_\cY}\right].
\end{equation}
Using the product structure of $g_{\cY}(\go) := \prod_{i=1}^m g_{(i,y_{i-1})}(\go)$, 
we perform an approximate factorization of the above expression by considering the worse possible intermediate points for $S$. It yields the following upper bound
\begin{equation}
\label{reduce1block}
\prod_{i=1}^m \max_{x\in\gL_{y_{i-1}}} \bE\left[  \bbE^S \left[ g_{(i,y_{i-1})}(\go) \right] \, ; \, S_{i\ell} \in\Lambda_{y_i} \big| \, S_{(i-1)\ell}=x\right]  \notag
\end{equation}
Using translation invariance \eqref{rer} and summing over all $\cY$ we have
\begin{equation}
 \sum_{\cY  \in (\bbZ^2)^m} \bbE \big[ g_{\cY}(\go)\, Z_{\cY} \big]^{1/2}  \leq \left( \sum_{y\in\bbZ^2}  
 \max_{x\in\gL_{0}} \left( \bE_x\left[  \bbE^S \left[ g_{(1,0)}(\go) \right] \, ; \, S_{\ell} \in\Lambda_{y} \right]\right)^{1/2}  \right)^m,
\end{equation}
where $\bP_x$ denotes the law of the simple random walk starting from $x$.
Therefore, one only needs to consider one block: combining this with Lemma \ref{lem:key} and \eqref{eq:CS},\eqref{eq:CS1}, this proves Proposition \ref{prop:fracmoment}.

\subsection{Choice of the change of measure}
\label{sec:chgmeas}

We now specify our choice of $X$.
With the expression that we have chosen for $g$, we want $X$ to be typically larger than $K$ under $\bbE^S$,
at least for most realizations of $S$.
We choose $X$ to be a positive $q$-linear form of $(\go_{n,x})_{(n,x)\in B_{(1,0)}}$, which corresponds more or less to 
the term of order $q$ appearing in the Taylor expansion of the partition function $Z^{\gb,\go}_{\ell}$.
We set
\begin{equation}\label{eq:defq}
q_{\ell} : = (\log\log \ell)^2 \, .
\end{equation}
To simplify the calculations, we also reduce the interactions (in time) to a range $u \ll \ell$. We choose $u=u_{\ell}:=\lfloor \ell^{1-\gep^2}\rfloor$.
Note that this gives (cf. \eqref{def:D}) 
\begin{equation}\label{eq:frameu}
D(u)\stackrel{\gb\to 0}{\sim} \frac{1-\gep^2}{\pi} \log \ell,
\end{equation}
so that the definition of $\ell$ ensures that for $\gb$ sufficiently small (and if $\gep<1/10$)
\begin{equation}\label{eq:compar}
(1+\gep) \le \gb^2 D(u)\le (1+2\gep).
\end{equation}
We introduce the set of increasing sequences with increments no larger than $u$
\begin{equation}
J_{\ell,u}:= \{ \u{t}:=(t_0,\ldots, t_q) \in \bbN^{q+1} \, | \, 1\leq  t_0 < \cdots < t_q \leq \ell \, ; \, (t_{j}-t_{j-1}) \leq u,  \, \forall j\in\{1,\ldots, q\}\} \, ,
\end{equation}
We now define
\begin{equation}
\label{def:X}
X(\go) : = \frac{1}{2 R \ell D(u)^{q/2}}   \sum_{ \u{x} \in (\tilde\gL_0)^{q+1} ,\, \u{t}\in J_{\ell,u} } P(\u{t},\u{x})\,  \go_{\u{t},\u{x}}
\end{equation}
where for any $\u{x}=(x_0,\ldots, x_q) \in (\tilde\gL_0)^{q+1} $ we set $\go_{\u{t},\u{x}} = \prod_{j=0}^q \go_{t_j,x_{j}}$, and 
\begin{multline}
P(\u{t},\u{x})=\prod_{j=1}^q p(t_j - t_{j-1},x_j -x_{j-1})\ind_{\{|x_j -x_{j-1}|\le \rho(t_j - t_{j-1})  \}} \\ = 
\bP_{x_0} \big( S_{t_j-t_0} = x_j-x_0 \, , \, \forall j\in \{1,\ldots, q\} \big)\,  \ind_{\{|x_j -x_{j-1}|\le \rho(t_j - t_{j-1}) \, , \, \forall j\in \{1,\ldots, q\}\}}.
\end{multline}
Here and later in the proof $|x|=|x_1|+|x_2|$ denotes the $l_1$ norm on $\bbZ^2$, and
\begin{equation}
 \rho(t):=\min \big( t/2, (\log t)\sqrt{t} \big)\, .
\end{equation}
The condition $|x_j -x_{j-1}| \le \rho(t_j - t_{j-1})$ turns out to be convenient for technical reasons but is not essential.
For the rest, as already mentioned, $X$ ressembles the term of order $q$ in the Taylor expansion in $\gb$ of the partition function ``restricted to a cell''.
We refer to \cite[Section 4.2]{cf:BL15} for a more elaborate discussion on the definition of $X(\go)$.

\medskip

We easily check that $\bbE[X(\go)]=0$ and
\begin{equation}\label{eq:isivar}
\bbE [(X(\go))^2]  =  \frac{1}{4 R^2 \ell^2 D(u)^{q}} \sum_{ \u{x} \in (\tilde\gL_0)^{q+1} ,\, \u{t}\in J_{\ell,u} }  P(\u{t},\u{x})^2  \leq 1\, .
 \end{equation}

\begin{lemma}
\label{lem:key}
With the choice of change of measure made in \eqref{def:g}-\eqref{def:X}, there exists some $\gb_\gep>0$ such that, for all $\gb\leq \gb_\gep$, one has
\[
\sum_{y\in\bbZ^2} \max_{x\in\gL_{0}} \bE_y\left[  \bbE^S \left[ g_{(1,0)}(\go) \right] \, ; \, S_{\ell} \in\Lambda_{y_1} \right]^{1/2} \leq  \frac{1}{4}\, .
\]
\end{lemma}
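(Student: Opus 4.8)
The strategy is to show that the change of measure $g_{(1,0)}$ forces a substantial decay in each summand, with enough slack left over to perform the sum over $y\in\bbZ^2$. The key heuristic is that $X(\go)$ is built so that $\bbE^S[X(\go)]$ is of order $\gb^q D(u)^{q/2}/D(u)^{q/2} = (\gb^2 D(u))^{q/2}\,(\text{normalization})$, and by the choice \eqref{def:ell} of $\ell$ together with \eqref{eq:compar} we have $\gb^2 D(u)\in(1+\gep,1+2\gep)$, which (when raised to the power $q/2$ with $q=q_\ell=(\log\log\ell)^2\to\infty$) makes $\bbE^S[X(\go)]$ blow up far beyond the threshold $e^{K^2}$. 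Concretely, I would first establish the deterministic estimate $\bbE^S[X(\go)]\ge e^{2K^2}$ (say) whenever the trajectory $S$ stays inside the corridor $\tilde\gL_0$ on the time interval $[1,\ell]$ — more precisely, whenever $S$ makes enough "good" increments $|S_{t_j}-S_{t_{j-1}}|\le\rho(t_j-t_{j-1})$ of length $\le u$; this is where the factor $(\gb^2D(u))^{q/2}$ and the truncation function $\rho$ get used, and it should mirror \cite[Section 4]{cf:BL15}. Call the set of such trajectories $\cG$ (a "good event" for $S$).

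Next I would control the $\bbP^S$-fluctuations of $X$: I need that for $S\in\cG$, $\bbP^S(X(\go)<e^{K^2})$ is small, so that $\bbE^S[g_{(1,0)}(\go)]\le e^{-K}+\bbP^S(X<e^{K^2})\le 2e^{-K}$, say. Since $X$ is a $q$-linear (chaos-type) functional of independent variables with mean (under $\bbP^S$) much larger than its standard deviation, a Chebyshev/hypercontractivity argument on $\Var^S(X)$ gives this; one uses \eqref{newExpVar}–\eqref{eq:boundsonevar} to see that the tilted variances are still essentially $1$, so $\Var^S(X)$ is comparable to $\bbE[X^2]\le 1$ while $\bbE^S[X]$ is enormous. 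Combining, $\bE_y[\bbE^S[g_{(1,0)}(\go)];S_\ell\in\gL_{y_1}]\le 2e^{-K}\,\bP_y(S_\ell\in\gL_{y_1}) + \bP_y(S\notin\cG,\ S_\ell\in\gL_{y_1})$. The point of the corridor being of width $R\sqrt{\ell}$ with $R$ large is exactly that the second term — the probability of leaving the corridor or failing to have enough short, localized increments — is small uniformly in the starting point $x\in\gL_0$; this is what Lemma~\ref{lem:EX} (referred to in the text) must be providing, and I would invoke it here.

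Finally I would sum over $y$. Since $\sum_{y\in\bbZ^2}\bP_y(S_\ell\in\gL_{y_1})=1$ trivially (and the $\cG$-failure term sums to something $\le\gep$), taking square roots and summing is the one slightly delicate point: $\sum_y \bP_y(S_\ell\in\gL_{y_1})^{1/2}$ is NOT bounded by $1$, since there are infinitely many $y$. Here one uses the diffusive concentration of $S_\ell$: by a local CLT / Gaussian tail bound, $\bP_y(S_\ell\in\gL_{y_1})\le C e^{-c|y-y_1|^2}$ for $|y-y_1|$ large (after rescaling by $\sqrt\ell$), and the near-by terms $|y-y_1|=O(1)$ contribute only a bounded number of terms each of size $O(1)$; so $\sum_y \bP_y(\cdot)^{1/2}\le C'$ for an absolute constant $C'$. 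Then the whole sum is bounded by $C'\sqrt{2e^{-K}}+(\text{small})$, which is $\le\tfrac14$ once $K$ is chosen large enough and $\gb$ small enough. The main obstacle, as in \cite{cf:BL15}, is the first step: getting a clean lower bound $\bbE^S[X(\go)]\ge e^{2K^2}$ for all $S$ in a set $\cG$ of overwhelming probability, which requires carefully tracking how the $\rho$-truncation and the restriction to increments $\le u$ interact with the sum defining $X$, and showing the loss incurred is absorbed by the $(1+\gep)$ margin in \eqref{eq:compar}. The summation-over-$y$ and the variance bound are comparatively routine given the tools in the paper.
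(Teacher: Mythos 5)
Your overall plan follows the paper's: bound $\bbE^S[g_{(1,0)}]$ by $e^{-K}+\bbP^S(X\leq e^{K^2})$ on a ``good'' set of trajectories, handle the remaining sum over $y$ by splitting into a finite region (where one uses a uniform smallness bound on $\bE_x[\bbE^S[g]]$) and a Gaussian tail (where $g\leq 1$ and the diffusive decay of $\bP_x(S_\ell\in\Lambda_y)$ suffice). This is precisely the architecture of the paper's proof, with Lemmas~\ref{lem:EX} and~\ref{lem:VarX} doing the mean/variance work and Lemma~\ref{lem:EyESg} packaging them via Chebyshev.

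However, there is a genuine gap in your treatment of $\Var^S(X)$. You claim that because the tilted per-coordinate variances are ``essentially $1$'', the chaos variance $\Var^S(X)$ is comparable to $\bbE[X^2]\leq 1$, i.e.\ $O(1)$. This is false. Even the diagonal (``all $\hat\go$'') contribution is of order $(1+\gep^3/2)^{q+1}$, because a per-variable inflation of order $\gep^3$ compounds multiplicatively over $q+1$ factors and $q=q_\ell\to\infty$; and the off-diagonal contributions, in which some coordinates take their tilted mean $\gl'(\gb)\ind_{S_n=x}$ rather than a centered $\hat\go$, produce an entire cascade of lower-order chaoses whose control is the bulk of the work — this is the content of the long computation \eqref{eq:var1}--\eqref{thisis} in the paper, requiring Lemma~\ref{lem:checkthesum} and the local-CLT estimate Lemma~\ref{eq:techlem}. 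The reason Chebyshev nevertheless closes is the carefully engineered gap between $\bbE^S[X]\geq(1+\gep^2)^q$ and $\Var^S(X)\leq(1+\gep^3)^q$: one has $(1+\gep^3)^q/(1+\gep^2)^{2q}\to 0$ because $1+\gep^3<(1+\gep^2)^2$. Your sketch, by asserting the variance is $O(1)$, misses this delicate bookkeeping entirely and would give you no reason to believe the specific $(1+\gep^2)^q$ threshold (rather than some fixed $e^{2K^2}$) is what one must aim for. A second, much more minor issue: in your last step you cannot pass $\sum_y\bP_x(\cG^c,S_\ell\in\Lambda_y)\leq\gd$ through the square root; for the $\cG^c$-term you again need the $A$-split (finitely many $y$ with $\|y\|_2\leq A$ contributing $\leq 4A^2\sqrt{\gd}$, and Gaussian decay for the rest), exactly as you already do for the first term. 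That part is easily fixed; the variance gap is the substantive one.
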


\section{Proof of the key Lemma \ref{lem:key}}
\label{sec:keylemma}

In this section, for notational convenience, we write $g(\go)$ instead of $g_{(1,0)}(\go)$.

\smallskip
First, if $A$ is chosen sufficiently large and $\| y \|_2\geq A$ ($\| \cdot \|_2$ denotes the Euclidean norm), then uniformly for $x\in \Lambda_0 = (-\sqrt \ell/2 ,\sqrt\ell/2]^2$, we have
\[
\bP_x(S_{\ell}\in \Lambda_{y}) \leq e^{- \frac14 \|y\|_2^2} \, .
\]
Therefore, as $g(\go)\le 1$, we have
\begin{align*}
\sum_{\|y \|_2 \geq A} \max_{x\in\gL_{0}} \bE_x\left[  \bbE^S \left[ g(\go) \right] \, ; \, S_{\ell} \in\Lambda_{y} \right]^{1/2}
&\leq \sum_{\|y\|_2 \geq A} \max_{x\in\gL_{0}}  \bP_x(S_{\ell}\in \Lambda_{y})^{1/2} \\
& \leq \sum_{\|y\|_2 \geq A} e^{- \frac14 \|y\|_2 ^2} \leq \frac18 \, ,
\end{align*}
where the last inequality holds provides that $A$ is large enough.
For the remaining sum, we use the (rather rough) bound
\begin{equation*}
\sum_{\|y \|_2  \leq A} \max_{x\in\gL_{0}} \bE_x\left[  \bbE^S \left[ g(\go) \right] \, ; \, S_{\ell} \in\Lambda_{y} \right]^{1/2} \leq 4 A^2 \max_{x\in\gL_{0}} \bE_x\left[  \bbE^S \left[ g(\go) \right]  \right]^{1/2}.
\end{equation*}
Therefore, we need to control $\bE_x\left[  \bbE^S \left[ g(\go) \right]  \right]$ for every $x\in \Lambda_0$:

\begin{lemma}
\label{lem:EyESg}
For any $\eta>0$, there exist constants $K(\eta)>0$ and $\gb_{0}(\gep,\eta)$ such that for all $\gb\leq \gb_0$, and for any $x\in \Lambda_0$
\begin{equation}
\bE_x\left[  \bbE^S \left[ g(\go) \right]  \right] \leq \eta \, .
\end{equation}
\end{lemma}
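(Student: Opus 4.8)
The plan is to estimate $\bE_x[\bbE^S[g(\go)]]$ by showing that, for a typical trajectory $S$ (one that stays inside the corridor $\tilde\gL_0$ for a substantial fraction of the time), the tilted field $\bbP^S$ pushes $X(\go)$ above the threshold $e^{K^2}$ with probability close to one. Recall $g(\go) = \exp(-K\ind_{\{X(\go)\ge e^{K^2}\}})$, so $\bbE^S[g(\go)] \le e^{-K} + \bbP^S(X(\go) < e^{K^2})$. Thus it suffices to bound $\bP_x\otimes\bbP^S(X(\go) < e^{K^2})$; by Markov/Chebyshev it is enough to show that under $\bbP^S$ the random variable $X(\go)$ has mean much larger than $e^{K^2}$ while its variance is controlled, for most $S$.

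First I would compute $\bbE^S[X(\go)]$ using \eqref{newExpVar}: since $\bbE^S[\go_{t,x}] = \gl'(\gb)\ind_{\{S_t=x\}}$ and the $\go$'s are independent under $\bbP^S$, the expectation of the $q$-linear form $\go_{\u t,\u x} = \prod_j \go_{t_j,x_j}$ equals $\prod_j \gl'(\gb)\ind_{\{S_{t_j}=x_j\}}$ up to corrections coming from the variance change on the diagonal — but here the product is over distinct space-time points $(t_j,x_j)$, so each factor is simply $\gl'(\gb)\ind_{\{S_{t_j}=x_j\}}\approx \gb\,\ind_{\{S_{t_j}=x_j\}}$ by \eqref{eq:boundsonevar}. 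Hence
\[
\bbE^S[X(\go)] = \frac{\gl'(\gb)^q}{2R\ell D(u)^{q/2}} \sum_{\u t\in J_{\ell,u}} P(\u t, (S_{t_0},\dots,S_{t_q})),
\]
where the sum over $\u x$ has collapsed onto the trajectory. The main computation is then to show that for a typical $S$ this is $\gg e^{K^2}$. The key point is that $P(\u t,(S_{t_j})_j)$ is essentially $\bP_{S_{t_0}}(\text{walk revisits }S_{t_0},\dots \text{ at times }t_0,\dots)$, and summing over admissible $\u t$ produces a factor morally like $D(u)^q$; combined with $\gl'(\gb)^q \approx \gb^q$ and the normalization $D(u)^{-q/2}$, one gets roughly $(\gb^2 D(u))^{q/2}/(2R\ell)$ times a combinatorial count of starting times $t_0$, which is of order $\ell$ (minus the constraint that $S$ stays in $\tilde\gL_0$). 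Using \eqref{eq:compar}, $\gb^2 D(u)\ge 1+\gep$, so $(\gb^2 D(u))^{q/2}\ge (1+\gep)^{q/2} = \exp(\tfrac q2\log(1+\gep))$; with $q = q_\ell = (\logg \ell)^2$ this grows faster than any power of $\log\ell$, hence dominates $e^{K^2}$ for fixed $K$ once $\gb$ (hence $\ell$) is small enough. This is where the precise choice of $\ell$ in \eqref{def:ell} is used: it is calibrated exactly so that $\gb^2 D(u)$ exceeds $1$.

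Next I would control the second moment $\bbE^S[X(\go)^2]$, or rather the centered second moment / variance under $\bbP^S$. Expanding the square gives a double sum over pairs $(\u t,\u x)$, $(\u t',\u x')$; the diagonal-type terms where the two index tuples share no common space-time point contribute essentially $\bbE^S[X]^2$, so the variance is governed by the ``overlap'' terms where some $\go_{t_j,x_j}$ appears in both products. For those, $\bbE^S[\go^2] = \gl''(\gb) \le 1+\gep^3/2$ on the tilted sites (by \eqref{eq:boundsonevar}) and $=1$ elsewhere, so each squared factor costs only $O(1)$; the combinatorial gain from forcing $x_j = x'_j$ (intersection) against the normalization $D(u)^q$ should make the variance small relative to $\bbE^S[X]^2$, uniformly in typical $S$. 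This is the step I expect to be the main obstacle: one must show the ratio $\Var^S(X)/(\bbE^S[X])^2$ stays bounded (ideally $\to 0$) and, more delicately, that the set of ``bad'' trajectories $S$ — those spending too little time in $\tilde\gL_0$, for which $\bbE^S[X]$ is not large — has small $\bP_x$-probability. The first is a careful but essentially combinatorial estimate on sums of products of random-walk kernels with the truncation $\ind_{\{|x_j-x_{j-1}|\le\rho(t_j-t_{j-1})\}}$ (the truncation is there precisely to keep these sums under control, as the paper notes); the second follows from the diffusive scaling — a random walk started in $\Lambda_0$ stays in the much larger box $\tilde\gL_0$ (width $2R\sqrt\ell$ vs. $\sqrt\ell$) with probability close to $1$ up to time $\ell$ provided $R$ is large, which is exactly the role of the parameter $R$.

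Finally, assembling: for $S$ in the good set, $\bbP^S(X < e^{K^2}) \le \bbP^S(|X - \bbE^S[X]| > \bbE^S[X] - e^{K^2}) \le \Var^S(X)/(\bbE^S[X]-e^{K^2})^2$, which is small; for $S$ in the bad set we just bound $g\le 1$ and use $\bP_x(\text{bad}) \le \eta/2$. Choosing $K = K(\eta)$ large enough that $e^{-K} \le \eta/4$ and then $\gb \le \gb_0(\gep,\eta)$ small enough (hence $\ell$ large, $q$ large) that the variance bound gives $\le \eta/4$ on the good set, we conclude $\bE_x[\bbE^S[g(\go)]] \le e^{-K} + \bP_x(\text{bad}) + \sup_{S\text{ good}}\bbP^S(X<e^{K^2}) \le \eta$, uniformly in $x\in\Lambda_0$. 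The only subtlety in the ordering is that $K$ must be fixed \emph{before} $\gb$, since $e^{K^2}$ is a fixed constant that the growing quantity $(\gb^2 D(u))^{q/2}$ must beat; this is consistent with the statement of the lemma, where $K(\eta)$ is chosen first and $\gb_0(\gep,\eta)$ second.
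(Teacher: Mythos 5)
Your plan matches the paper's proof of Lemma~\ref{lem:EyESg} essentially verbatim: you decompose $\bbE^S[g(\go)] \le e^{-K} + \bbP^S(X < e^{K^2})$, control $\bbE^S[X]$ from below and $\Var^S(X)$ from above on a high-$\bP_x$-probability set of trajectories $S$ (these are precisely the paper's Lemmas~\ref{lem:EX} and~\ref{lem:VarX}), finish with Chebyshev, and correctly fix $K=K(\eta)$ before $\gb_0$. The only nit is an off-by-one: since $\u{t}=(t_0,\dots,t_q)$ has $q+1$ distinct entries, the tilted mean carries a factor $(\gl'(\gb))^{q+1}$, not $(\gl'(\gb))^{q}$ — immaterial for the conclusion.
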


Applying this lemma with $\eta = \big( \frac{1}{32 A^2}\big)^2$, we have
\[\sum_{\|y \|_2  \leq A} \max_{x\in\gL_{0}} \bE_x\left[  \bbE^S \left[ g(\go) \right] \, ; \, S_{\ell} \in\Lambda_{y} \right]^{1/2} \leq \frac18 \, ,\]
and Lemma \ref{lem:key} is proven.
To prove Lemma \ref{lem:EyESg} we need some control over the distribution of $X(\go)$ under $\bbP^S$.
\begin{lemma}
\label{lem:EX}
For any $\gd>0$, there exist $R(\delta)$ and $\gb_0(\gep,\gd)$ such that, for every $x\in\gL_0$ and any $\gb\leq \gb_0$
\[ \bP_{x}\big( \bbE^S[X] \geq (1+\gep^2 )^q \big)  \geq 1-\gd \, .\]
\end{lemma}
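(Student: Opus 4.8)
The goal is to show that the tilted mean $\bbE^S[X]$ is bounded below by $(1+\gep^2)^q$ with probability close to one, uniformly over the starting point $x\in\gL_0$ of the random walk $S$. The plan is to first compute $\bbE^S[X]$ explicitly. Since under $\bbP^S$ the variable $\go_{n,z}$ has mean $\gl'(\gb)\ind_{\{S_n=z\}}$ and is otherwise centered and independent, the multilinear structure $\go_{\u t,\u x}=\prod_{j=0}^q\go_{t_j,x_j}$ gives
\[
\bbE^S[\go_{\u t,\u x}]=\prod_{j=0}^q \gl'(\gb)\ind_{\{S_{t_j}=x_j\}}=\gl'(\gb)^{q+1}\ind_{\{S_{t_j}=x_j\,\forall j\}}.
\]
Hence $\bbE^S[X]=\frac{\gl'(\gb)^{q+1}}{2R\ell\,D(u)^{q/2}}\sum_{\u t\in J_{\ell,u}}P(\u t,(S_{t_0},\dots,S_{t_q}))\ind_{\{|S_{t_j}-S_{t_{j-1}}|\le\rho(t_j-t_{j-1})\}}$. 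Writing $P(\u t,(S_{t_j})_j)$ back in terms of kernels, each factor $p(t_j-t_{j-1},S_{t_j}-S_{t_{j-1}})$ is (up to the confinement indicator) the ``diagonal-type'' weight one expects, and the whole expression is a random sum indexed by the increments of $S$ along the times $t_0<\dots<t_q$.

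The next step is a first-moment and concentration analysis of this random sum under $\bP_x$. Taking $\bE_x$ and using $\bE_x[p(t_j-t_{j-1},S_{t_j}-S_{t_{j-1}})]=\sum_z p(t_j-t_{j-1},z)^2=p(2(t_j-t_{j-1}),0)$ — again up to the effect of the confinement indicator $\{|S_{t_j}-S_{t_{j-1}}|\le\rho(\cdot)\}$, which by a standard local CLT / moderate deviation estimate removes only a negligible fraction of the kernel mass — one gets
\[
\bE_x\big[\bbE^S[X]\big]\approx \frac{\gl'(\gb)^{q+1}}{2R\ell\,D(u)^{q/2}}\sum_{\u t\in J_{\ell,u}}\prod_{j=1}^q p(2(t_j-t_{j-1}),0).
\]
Summing over $\u t\in J_{\ell,u}$: the outer index $t_0$ ranges over roughly $\ell$ values, and the $q$ increments $s_j=t_j-t_{j-1}\in\{1,\dots,u\}$ are summed freely (up to the ordering constraint, which costs only the requirement that $\sum s_j\le\ell$, true with overwhelming weight since $qu\ll\ell$), giving $\sum_{s=1}^u p(2s,0)=D(u)$ for each of the $q$ factors. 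Thus the sum is $\approx \ell\,D(u)^q$, and
\[
\bE_x\big[\bbE^S[X]\big]\approx \frac{\gl'(\gb)^{q+1}}{2R}\,D(u)^{q/2}.
\]
Now invoke \eqref{eq:boundsonevar}: $\gl'(\gb)\ge\gb(1-\gep^3)$, and $\gb^2 D(u)\ge 1+\gep$ by \eqref{eq:compar}, so $\gb^{q}D(u)^{q/2}\ge (1+\gep)^{q/2}\gg(1+\gep^2)^q$ for small $\gep$ (here $q=q_\ell\to\infty$); the stray factors $\gl'(\gb)$, $\gb$ and $1/(2R)$ are absorbed because the exponential gain in $q$ dominates. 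This is where $R=R(\gd)$ enters only mildly — it controls the spatial truncation to $\tilde\gL_0$, which I discuss next.

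To upgrade the first-moment bound to a high-probability bound I would compute the variance (or a suitable higher moment) of $\bbE^S[X]$ under $\bP_x$ and apply Chebyshev. The variance involves two independent copies of the increment structure and reduces to intersection-local-time-type sums; the key point is that the normalization $D(u)^{q/2}$ is exactly tuned so that $\mathrm{Var}_x(\bbE^S[X])/\big(\bE_x[\bbE^S[X]]\big)^2\to 0$, which is the same computation underlying \eqref{eq:isivar} and the analogous step in \cite[Section~4]{cf:BL15}. The role of $R$: the spatial sum is restricted to $\u x\in(\tilde\gL_0)^{q+1}$, i.e.\ to the box of radius $R\sqrt\ell$, whereas the trajectory $(S_{t_0},\dots,S_{t_q})$ starting from $x\in\gL_0$ may exit it; but by the diffusive scaling and the increment bound $qu\ll\ell$, the walk stays within $R\sqrt\ell$ of $x$ (hence within $\tilde\gL_0$ after accounting for $\gL_0$) with probability $\ge 1-\gd$ provided $R=R(\gd)$ is large, and on that event the confinement indicators in the definition of $P$ are the only truncation in play. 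Combining the ``walk stays in the box'' event (probability $\ge 1-\gd/2$) with the Chebyshev bound on the restricted sum (probability $\ge 1-\gd/2$) gives the claim.

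The main obstacle is the bookkeeping in the two-point (variance) estimate: one must show that pairs of time-space configurations that share some but not all coordinates contribute negligibly compared to the ``fully disjoint'' diagonal term, and that the confinement indicators $\{|x_j-x_{j-1}|\le\rho(t_j-t_{j-1})\}$ — while convenient — do not destroy the $D(u)^q$ asymptotics. Both are controlled by local CLT bounds of the form $p(2t,0)\asymp 1/t$ together with the fact that $\rho(t)=\min(t/2,(\log t)\sqrt t)$ keeps $1-o(1)$ of the Gaussian mass; this is precisely the technical content deferred to the local-CLT estimates, and it mirrors the corresponding argument in \cite{cf:BL15}, so I would cite that structure and carry out the adaptation to the $d=2$ random-walk kernel.
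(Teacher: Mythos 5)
Your plan matches the paper's proof of Lemma~\ref{lem:EX} essentially step by step: compute $\bbE^S[X]$ explicitly via the tilt, choose $R=R(\gd)$ large enough that the walk is confined to $\tilde\gL_0$ on an event of probability $\ge 1-\gd/2$, use \eqref{eq:boundsonevar} and \eqref{eq:compar} to extract the exponentially large factor from $(\gb^2 D(u))^{q/2}\ge(1+\gep)^{q/2}$, and close with a first/second moment plus Chebyshev argument on the remaining random-walk functional (the paper's $W_\ell$). One bookkeeping correction worth flagging: the variance you must control is that of $W_\ell$ as a function of $S$ under $\bP_x$, which is \emph{not} the same calculation as \eqref{eq:isivar} (that is the variance of $X$ over $\go$); the paper treats it by writing $W_\ell-\bE[W_\ell]=\frac1\ell\sum_j Y_j$ with time-local summands $Y_j$ that become exactly uncorrelated once $|j_1-j_2|\ge qu$ by the Markov property (Lemma~\ref{lem:Y}), rather than via the ``partially shared time-space coordinates'' mechanism you sketch.
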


\begin{lemma}
\label{lem:VarX}
For any $\gd>0$, and $R>0$, there exists some $\gb_0(\gep,\gd)$ such that,
for every $x\in\gL_0$ and any $\gb\leq \gb_0$
\begin{equation}
\bP_x\left[ \Var^S(X) \leq (1+\gep^3)^q \right]\ge 1-\delta, .
\end{equation}
\end{lemma}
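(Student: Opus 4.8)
\textbf{Strategy.} The plan is to compute $\Var^S(X)$ exactly and then bound it with high $\bP_x$-probability. Recall from \eqref{def:X} that $X(\go)$ is a $q$-linear form $\sum_{\u t,\u x} c(\u t,\u x)\,\go_{\u t,\u x}$ with $c(\u t,\u x)=\frac{P(\u t,\u x)}{2R\ell D(u)^{q/2}}$ and $\go_{\u t,\u x}=\prod_{j=0}^q\go_{t_j,x_j}$. Under $\bbP^S$ the variables $\go_{n,x}$ remain independent with means and variances given by \eqref{newExpVar}, so $\bbE^S[\go_{\u t,\u x}\go_{\u s,\u z}]$ factorizes over space-time sites. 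The key observation is that, since $\gl''(\gb)\le 1+\gep^3/2$ by \eqref{eq:boundsonevar}, each site contributes a second-moment factor that is at most $(1+\gep^3/2)$ times the corresponding annealed factor, while the cross terms coming from the tilted means $\gl'(\gb)\ind_{\{S_n=x\}}$ only make things larger in a controlled way. I would first expand $\Var^S(X)=\bbE^S[X^2]-(\bbE^S[X])^2$ and, discarding the (nonnegative) subtracted square, bound $\Var^S(X)\le \bbE^S[X^2]$.

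\textbf{Key steps.} (1) Expand $\bbE^S[X^2]=\sum_{\u t,\u x}\sum_{\u s,\u z}c(\u t,\u x)c(\u s,\u z)\,\bbE^S[\go_{\u t,\u x}\go_{\u s,\u z}]$. For a fixed site $(n,w)$, let $m(n,w)$ be the number of times it appears among $\{(t_j,x_j)\}\cup\{(s_j,z_j)\}$ (which is $0$, $1$, or $2$). A factor $\bbE^S[\go_{n,w}^{m}]$ appears; using $\bbE^S[\go_{n,w}]=\gl'(\gb)\ind_{\{S_n=w\}}$ and $\bbE^S[\go_{n,w}^2]=1+(\gl'(\gb)^2+\gl''(\gb)-1)\ind_{\{S_n=w\}}$, one checks that each factor is bounded by $\big(1+C\gep^3\big)^{m(n,w)/2}$ times the annealed factor, for $\gb$ small, where $C$ is absolute and I use $\gl'(\gb)\le\gb(1+\gep^3)$ with $\gb^2\le\gep^3$ when $\ell$ (hence $1/\gb$) is large — actually more carefully, the extra mass is controlled because the total intersection weight is what \eqref{eq:isivar} normalizes. (2) After pulling out these factors, the doubled sum is bounded by $(1+C\gep^3)^{q}$ times $\bbE[X^2]\le 1$ (from \eqref{eq:isivar}), \emph{provided} the remaining sum really does reduce to the annealed one; here the point is that the off-diagonal contributions (where $\u t,\u x$ and $\u s,\u z$ overlap only partially) must also be controlled, and this is where one uses the range restriction $|x_j-x_{j-1}|\le\rho(t_j-t_{j-1})$ and $t_j-t_{j-1}\le u$ together with standard local CLT bounds $p(t,x)\le C/t$ to show the overlap sums are dominated by $D(u)^q$. (3) Since $\Var^S(X)\le \bbE^S[X^2]$ and the bound obtained is \emph{deterministic} of the form $(1+C\gep^3)^q$ up to lower-order corrections, one replaces $\gep^3$ by a smaller constant (absorbing $C$, which is legitimate since $\gep$ is a free small parameter — though strictly the statement is phrased with $\gep^3$, so I would instead carry through the argument choosing the internal parameters so the final exponent is exactly $(1+\gep^3)^q$), and conclude that $\bP_x(\Var^S(X)\le(1+\gep^3)^q)=1-\gd$, in fact with probability $1$ if the bound is deterministic, and more crudely via Markov's inequality on the small-probability event that the tilted means conspire badly if one keeps the $(\bbE^S[X])^2$ term.

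\textbf{Main obstacle.} The delicate point is step (2): controlling the mixed second-moment terms where the two index tuples $(\u t,\u x)$ and $(\u s,\u z)$ share some but not all sites. When they share a site, the product $\go_{\u t,\u x}\go_{\u s,\u z}$ picks up a square at that site, and the combinatorics of how partial overlaps contribute needs the constraint $(t_j-t_{j-1})\le u$ and the cutoff $\rho$ to keep the sums convergent and comparable to $D(u)^q$ rather than $D(\ell)^q$. Getting the constant in the exponent to be genuinely $(1+\gep^3)^q$ rather than $(1+C\gep^3)^q$ requires being slightly careful about how $\gl''(\gb)-1\le\gep^3/2$ interacts with the combinatorial multiplicity, but since $q=q_\ell\to\infty$ only like $(\log\log\ell)^2$ and $\gb\to0$, any fixed constant degradation is harmless up to shrinking $\gep$, so the structure of the argument is robust. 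This mirrors the variance estimate in \cite[Section 4]{cf:BL15} and I would follow that blueprint closely.
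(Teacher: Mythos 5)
Your central step—bounding $\Var^S(X)\le \bbE^S[X^2]$ by discarding the subtracted square—breaks the proof fatally. Under $\bbP^S$, $X$ is \emph{not} centered: Lemma~\ref{lem:EX} is precisely the statement that on a high-probability event, $\bbE^S[X]\geq(1+\gep^2)^q$. Hence $(\bbE^S[X])^2\geq(1+\gep^2)^{2q}$, which is exponentially larger in $q$ than the target $(1+\gep^3)^q$. So $\bbE^S[X^2]\geq (1+\gep^2)^{2q}$ and your bound is far too weak; the contribution of the tilted means to $\bbE^S[X^2]$ is not a ``controlled degradation'' of the annealed second moment, it is the \emph{dominant} part and must be cancelled exactly against $(\bbE^S[X])^2$. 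The whole point of the variance computation is to organize this cancellation. The paper does this by substituting $\go_{n,x}=\hat\go_{n,x}+\gl'(\gb)\ind_{\{S_n=x\}}$ with $\hat\go$ centered under $\bbP^S$, expanding the product in $X$ multilinearly over which factors pick the mean part ($r$ of them) versus the $\hat\go$ part; the $r=q+1$ term reproduces exactly $(\bbE^S[X])^2$ and is removed when forming the variance, the $r=0$ term gives the genuine leading contribution $\leq(1+\gep^3/2)^{q+1}$ via \eqref{eq:isivar} and \eqref{eq:boundsonevar}, and the intermediate $1\leq r\leq q$ terms are shown to be small.

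A second, related gap: the resulting bound is \emph{not} deterministic in $S$, contrary to what you assert at the end. The control of the mixed terms $1\leq r\leq q$ requires that the walk has not drifted too far over windows of length $qu$; this is hypothesis \eqref{eq:hypo}, which holds only with $\bP_x$-probability $\geq1-\gd$, and this is exactly where the $1-\gd$ in the statement comes from (not from a ``bad tilted means'' Markov argument). The technical input needed to sum the intermediate $r$ terms is Lemma~\ref{lem:checkthesum}, which you gesture at (``the combinatorics of how partial overlaps contribute'') but do not supply; establishing it requires the renewal-type factorization \eqref{eq:toprove} and the ratio estimate of Lemma~\ref{eq:techlem}, none of which your sketch provides. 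Finally, the remark about ``shrinking $\gep$'' to absorb constants is not available: $\gep$ is fixed before $\gb_0$ is chosen and appears with the precise power $\gep^3$ in the statement, so the argument must genuinely land on $(1+\gep^3)^q$ rather than $(1+C\gep^3)^q$.
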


\begin{proof}[Proof of Lemma \ref{lem:EyESg}]
Recalling the definition \eqref{def:g} of $g(\go)$, we have for any $S$,
\begin{equation}
\label{controlEg1}
\bbE^S \left[ g(\go) \right]  \leq  e^{-K} +  \bbP^S\left(X(\go) \leq e^{K^2}\right),
\end{equation}
and we choose $K$ large such that $e^{-K}\le \eta/6$.
We define the event
\[
\cA:= \big\{ \bbE^S[X] \geq (1+\gep^2 )^q \big\} \, \cap \,  \big\{ \Var^S(X) \leq (1+\gep^3)^q \big\},
\]
and we  apply Lemmas \ref{lem:EX} and \ref{lem:VarX} for $\delta=\eta/3$, so that $\bP_x(\cA)\geq 1-2\eta/3$, for every $x\in\gL_0$.
Then, choosing $\gb$ small enough (so that $q$ is sufficiently large) we have  
$e^{K^2} \le \tfrac{1}{2} (1+\gep^2 )^q $, so that on the event $\cA$,
Chebychev's inequality yields
\begin{equation}
 \bbP^S\left(X(\go) \leq e^{K^2}\right) \le \bbP^S \left( X-\bbE^S[X] \leq - \tfrac12  (1+\gep^2 )^q \right)   
\le \frac{4(1+\gep^3)^q}{(1+\gep^2)^{2q}} \le \eta/6.
\end{equation}
Hence from \eqref{controlEg1}, we have (still on the event $\cA$)
\begin{equation}
\bbE^S \left[ g(\go) \right]\le \eta/3.
\end{equation}
Using the bound $\bbE^S \left[ g(\go) \right] \le 1$ on the complement of $\cA$ (which has probability at most $2\eta/3$), we conclude the proof of Lemma~\ref{lem:EyESg}.
\end{proof}

\subsection{Proof of Lemma \ref{lem:EX}}

From the definition \eqref{def:X} of $X$, and recalling \eqref{newExpVar}, we have, for any trajectory of $S$
\begin{multline}
\label{eq:coco0}
\bbE^S[X] =  \frac{(\gl'(\gb))^{q+1}}{2 R\ell D(u)^{q/2}} \sum_{\u{t} \in J_{\ell,u}} P(\u{t},\u{S}^{(\u{t})})\, \ind_{\{S_{t_k} \in \tilde\gL_0 \, \forall k\in\{0,\ldots, q\}\}} \,
\\
\ge \left[ \frac{(\gl'(\gb))^{q+1}}{2R\ell D(u)^{q/2}} \sum_{\u{t} \in J_{\ell,u}} P(\u{t},\u{S}^{(\u{t})}) \right] \ind_{\big\{ \max\limits_{t\leq \ell} \| S_t \|_{\infty}\leq R \sqrt{\ell} \big\}} \, ,
\end{multline}
where we used the notation $$\u{S}^{(\u{t})}:=(S_{t_0}, S_{t_1},\ldots, S_{t_q}).$$
Note that if $R=R(\gd)$ is chosen sufficiently large, we have for all $x\in (-\sqrt{\ell}/2, \sqrt{\ell}/2]^2\cap\bbZ^2$
\begin{equation}\label{eq:coco}
\bP_x \left( \max_{t\leq \ell} \| S_t \|_{\infty}> R \sqrt{\ell} \right)\le \bP_0 \left( \max_{t\leq \ell} \| S_t \|_{\infty}> (R-1/2) \sqrt{\ell} \right)\le \delta/2\, .
\end{equation}
On the event $\{ \max_{t\leq \ell} \| S_t \|_{\infty}\leq R \sqrt{\ell}\}$, 
we use  \eqref{eq:boundsonevar} which gives $\gl'(\gb) \geq (1-\gep^3) \gb$, to obtain
\begin{align}
\label{eq:coco2}
\bbE^S[X]  &\geq \gb (1-\gep^3)^{q+1} (\gb^2 D(u))^{q/2} \ \frac{1}{2R\ell D(u)^q}  \sum_{\u{t} \in J_{\ell,u}} P(\u{t},\u{S}^{(\u{t})}) \notag \\
& \geq   (1+\gep^2)^{2q} \frac{1}{\ell D(u)^q}  \sum_{\u{t} \in J_{\ell,u}} P(\u{t},\u{S}^{(\u{t})})\, ,
\end{align}
where in the last line, we used \eqref{eq:compar}, and the inequality
\[
\frac{\gb}{2R} (1-\gep^3)^{q+1} (1+\gep)^{q/2} \geq  (1+\gep^2)^{2 q}.
\]
which is valid provided $\gep<1/10$ and $\gb$ is small enough.
Hence combining \eqref{eq:coco0} with \eqref{eq:coco}-\eqref{eq:coco2} we have 
\begin{equation}
 \bP_x\Big(  \bbE^S[X]  \geq (1+\gep^2)^{q}      \Big) \le 
 \delta/2+  \bP_x\left[\frac{1}{\ell D(u)^q}  \sum_{\u{t} \in J_{\ell,u}} P(\u{t},\u{S}^{(\u{t})})\ge \frac{1}{(1+\gep^2)^{q}} \right].
\end{equation}
Then, we obtain again a lower bound if we restrict the sum to $\u{t}$ such that $t_0 \leq \ell/2$. 
We set
\[J'_{\ell,u}:= \big\{ \u{t}=(t_0,\ldots,t_q) \, ; \, t_0\leq \ell/2 \, ; \, t_j-t_{j-1} \in (0,u] \ \forall j\in\{1,\ldots,q\}\big\} .\]
Note that  $J'_{\ell,u}\subset J_{\ell,u}$ provided $\gb$ is small enough (because $\ell/2+qu\le \ell$).
Therefore, it is sufficient to show that
\begin{equation}
\label{eq:W}
\bP_x \left( W_{\ell} < \frac{1}{(1+\gep^2)^{q}} \right) \leq \gd/2 \, .
\end{equation}
where 
\begin{equation}
W_{\ell}:=\frac{1}{\ell D(u)^q}  \sum_{\u{t} \in J'_{\ell,u}} P(\u{t},\u{S}^{(\u{t})}) \, .
\end{equation}
Note that the law $W_{\ell}$ does not depend on the starting point $x$, hence for the rest of the proof
we replace $\bP_x$ by $\bP$.
We achieve the bound \eqref{eq:W} by controlling the first two moments of $W$ (we actually prove that $W_{\ell}$ converges in probability to $\frac12$ as $\ell\to\infty$).
We have
\begin{align}
\bE \left[ W_{\ell} \right] &= \frac{1}{\ell D(u)^q} \sum_{\u{t} \in J'_{\ell,u}} \bE\left[  P(\u{t},\u{S}^{(\u{t})}) \right].
\end{align}
Note that by the definition of $P$,  $\bE\left[  P(\u{t},\u{S}^{(\u{t})}) \right]$ is translation invariant and thus 
\begin{align}
 \sum_{\u{t} \in J'_{\ell,u}} \bE\left[ P(\u{t},\u{S}^{(\u{t})}) \right] &=\frac{\ell}{2} \sum_{\{ \u{t} \in J'_{\ell,u} \ |  \ t_0=1 \}} \bE\left[  P(\u{t},\u{S}^{(\u{t})}) \right] \notag\\
 & = \frac{\ell}{2}
 \left(\sum_{t=1}^u\bP\big( S_{2t}=0 \ ; \ |S_t|\le \rho(t) \big)    \right)^q=:  \frac{\ell}{2}\left(\hat D(u)\right)^q \, .
\end{align}
It is an easy exercise to show that the restriction $|S_t|\le \rho(t) = \min(t/2\, ,\, (\log t)\sqrt{t})$ has not much effect, and that (recall \eqref{def:D})
there exists a constant $C$ such that
\begin{equation}
 D(u)-C \le \hat D(u) \le D(u) \, .
\end{equation}
Hence, we conclude that
\begin{equation}
\label{EWconverge}
\frac{1}{2}\left(\frac{D(u)-C}{D(u)} \right)^q \le  \bE \left[ W_{\ell} \right] \le  \frac{1}{2}.
\end{equation}
and $\bE \left[ W_{\ell} \right]$ converges to $1/2$ when $\gb$ tends to zero (recall \eqref{eq:defq}-\eqref{eq:frameu}).

\medskip

Now let us estimate the variance of $W$.
We define, for $j\in \bbN$ 
\begin{equation}
Y_{j} = \frac{1}{D(u)^q} \sum_{\u{t} \in J'_{\ell,u}(j)} P(\u{t},\u{S}^{(\u{t})}) - \left(\frac{\hat D(u)}{D(u)}\right)^q
\end{equation}
where $J'_{\ell,u}(j) := \{ \u{t}\in J'_{\ell,u} \, ;\, t_0= j \}$. We have $\bE[Y_j]=0$, and $W_{\ell}- \bE[W_{\ell}] = \frac{1}{\ell} \sum_{j=1}^{\ell/2} Y_j$.
Hence, 
\begin{equation}\label{eq:cov}
\mathbf{Var}( W_\ell^2 ) = \frac{1}{\ell^2 } \sum_{j_1, j_2=1}^{\ell/2} \bE\left[Y_{j_1}Y_{j_2}\right]\, ,
\end{equation}
and we can conclude by showing that most covariance terms are zero. More precisely we have 
\begin{lemma}
\label{lem:Y}
One has that
\begin{itemize}
\item[(i)] There exists a constant $C_1$ such that, for all $j$, with probability $1$, $|Y_j| \leq (C_1)^q$.
\item[(ii)] If $|j_1-j_2|\ge  uq$, then $\bE[Y_{j_1} Y_{j_2} ]=0$.
\end{itemize}
\end{lemma}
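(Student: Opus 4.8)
The plan is to prove the two assertions of Lemma~\ref{lem:Y} separately, both by direct inspection of the definition of $Y_j$.

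\textbf{Part (i): almost sure bound.} Recall that $Y_j = D(u)^{-q}\sum_{\u t\in J'_{\ell,u}(j)} P(\u t,\u S^{(\u t)}) - (\hat D(u)/D(u))^q$. Since each factor $p(t_j-t_{j-1},x_j-x_{j-1})\mathbf 1_{\{\cdots\}}$ in $P(\u t,\u S^{(\u t)})$ is nonnegative and bounded by $1$, and more usefully by $\bE[\sup_x p(s,x)]$-type estimates, the first term is nonnegative. To bound it from above I would sum over $\u t\in J'_{\ell,u}(j)$ with $t_0=j$ fixed: the sum factorizes over the $q$ increments $s_i=t_i-t_{i-1}\in(0,u]$, and for each increment, using $P(\u t,\u S^{(\u t)})$ evaluated along the fixed path $S$, one has $\sum_{s=1}^u p(s,S_{t}-S_{t-1})\le \sum_{s=1}^u \sup_x p(s,x)\le C\log u \le C' D(u)$ for a constant (using $\sup_x p(s,x)\asymp 1/s$ in dimension $2$). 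Hence the first term is at most $(C'D(u))^q/D(u)^q=(C')^q$, and since the subtracted term $(\hat D(u)/D(u))^q\in[0,1]$, we get $|Y_j|\le (C_1)^q$ with $C_1=\max(C',1)$ — deterministically, uniformly in $j$ and in the trajectory $S$. This part is routine.

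\textbf{Part (ii): vanishing of far covariances.} The key observation is that $Y_j$ is a function of the random walk increments only in the time window $[j, j+qu]$ (more precisely $[j-1,j+qu]$): indeed every $\u t\in J'_{\ell,u}(j)$ has $t_0=j$ and $t_q-t_0=\sum_{i=1}^q(t_i-t_{i-1})\le qu$, so $t_q\le j+qu$, and $P(\u t,\u S^{(\u t)})$ depends on $S$ only through the increments $S_{t_i}-S_{t_{i-1}}$, $i=1,\dots,q$, which are measurable with respect to the increments of $S$ between times $j$ and $j+qu$. Since $S$ has independent increments, if $|j_1-j_2|\ge uq$ then the time windows $[j_1,j_1+qu]$ and $[j_2,j_2+qu]$ are disjoint (say $j_1<j_2$, then $j_1+qu\le j_2$), so $Y_{j_1}$ and $Y_{j_2}$ are independent; as each has mean zero, $\bE[Y_{j_1}Y_{j_2}]=0$.

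\textbf{Main obstacle.} The conceptual content is entirely in part (ii), and there the only subtlety is bookkeeping the precise time-support of $Y_j$ and verifying the window length is exactly $qu$ (not $qu+1$ or similar), which matters for the hypothesis $|j_1-j_2|\ge uq$ to suffice; one should double-check whether the shared endpoint $S_{t_0}=S_j$ creates any overlap — it does not, because $P(\u t,\u S^{(\u t)})$ only involves \emph{differences} $x_i-x_{i-1}$ and hence increments, never the position $S_j$ itself. The harder estimate in part~(i) — that the restriction $|S_t|\le\rho(t)$ and the truncation at range $u$ together cost only a bounded multiplicative factor per increment — has effectively already been recorded in the bound $D(u)-C\le\hat D(u)\le D(u)$ established just above, so it can be quoted rather than redone. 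I therefore expect the proof to be short, with the only genuine care needed being the independence argument's precise accounting of time supports.
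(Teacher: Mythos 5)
Your proof is correct and follows essentially the same route as the paper's: part~(i) via the local CLT bound $p(t,x)\le c_1/(1+t)$ together with $D(u)\asymp\log u$, and part~(ii) via the observation that $Y_j$ is measurable with respect to the walk's increments in $(j,j+qu]$. Your treatment of~(ii) is in fact a touch cleaner than the paper's: the paper decomposes $\bE[Y_{j_1}Y_{j_2}]=\sum_z\bE[Y_{j_1}\ind_{\{S_{j_2}=z\}}Y_{j_2}]$ and invokes conditional independence given $S_{j_2}=z$, whereas you note that since $P(\u t,\u S^{(\u t)})$ involves only differences $S_{t_i}-S_{t_{i-1}}$, the variables $Y_{j_1}$ and $Y_{j_2}$ are functions of disjoint blocks of i.i.d.\ increments and hence are \emph{unconditionally} independent, so the conditioning step is unnecessary.
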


Replacing the terms in \eqref{eq:cov} by either zero (if $|j_1-j_2|\ge uq$) or $(C_1)^{2q}$ (in other cases)
we obtain that 
\begin{equation}
 \mathbf{Var}( W_\ell^2 )  \leq \frac{2 u q}{\ell} (C_1)^{2q} \leq 2q\, (C_1)^{2q}\,  \ell^{-\gep^2}.   
\end{equation}
Since $q$ and $(C_1)^{2q}$ grow slower than any power of $\ell$, $\mathbf{Var}( W_\ell^2 )$ tends to $0$ as $\ell$ goes to infinity (or $\gb\downarrow 0$).
As a consequence, Chebychev's inequality together with \eqref{EWconverge} gives that $W_{\ell}$ converges in probability to $\frac12$ as $\ell\to\infty$, and \eqref{eq:W} hence Lemma~\ref{lem:EX} are proven. \qed

\begin{proof}[Proof of Lemma \ref{lem:Y}]
For the first point, we remark that trivially $Y_j\ge -1$. For an upper bound,
by the local Central Limit Theorem for the simple random walk on $\bbZ^2$ (which can be obtained with little more than the application of Stirling formula), 
there exists a constant $c_1$ such that 
\begin{equation}\label{eq:lclsrw}
\forall x\in \bbZ^2, \, p(t,x) \leq \frac{c_1}{1+t}.
\end{equation}
We therefore have
\begin{align*}
\sum_{\u{t} \in J'_{\ell,u}(j)} P(\u{t},\u{S}^{(\u{t})})\leq  \left(\sum_{i=1}^{u}  \frac{c_1}{1+i} \right)^{q} \leq (c_1)^{q} (\log u )^{2q}.
\end{align*}
As $D(u)$ is also of order $\log u$ (cf.~\eqref{def:D}), we obtain the result for a suitable $C_1$.

\medskip

For $(ii)$, note that 
\[\bE[Y_{j_1} Y_{j_2}] = \sum_{z \in \bbZ^2} \bE[  Y_{j_1}  \ind_{\{S_{j_2} =z \}} Y_{j_2} ]  \, .\]
If $j_2>j_1+qu$, then conditionnally on $S_{j_2}=z$ $Y_{j_1}$ and $Y_{j_2}$ are independent, and 
$\bE[Y_{j_2}|S_{j_2} =z ] =0 $ for all $z\in\bbZ^2$. Hence the result.
\end{proof}

\subsection{Proof of Lemm \ref{lem:VarX}}

We are going to show a uniform bound on the variance which holds provided that $S$ satisfies
\begin{equation}\label{eq:hypo}
\max\big\{ |S_t-S_{t'}| \ / \ 1\le t\le t' \le \ell, \, |t-t'|\le qu  \big\} \le  (\log u)\sqrt u. 
\end{equation}
Note that if $\ell$ is sufficiently large (i.e.\ $\gb$ sufficiently small), for every $x\in\gL_0$
this occurs with $\bP_x$ probability larger than $1-\delta$, by standard properties of the simple random walk.

\medskip

For any trajectory $S$, we define a modified environment
\[\hat \go_{n,x} := \go_{n,x} - \gl'(\gb) \ind_{\{S_n=x\}} \, .\]
It is such that under $\bbP^S$, the variables $\hat \go_{n,x}$ are independent and centered, with a variance smaller than
$1+(\gep^3/2)$, see \eqref{newExpVar}-\eqref{eq:boundsonevar}.
We want to expand
\begin{equation}
\label{eq:var1}
\bbE^S [ X^2 ] =\frac{1}{4R^2\ell^2 D(u)^q} \bbE^S \left[  \left(  \sum_{ \u{x} \in (\tilde{\gL}_0)^{q+1} ,\, \u{t}\in J_{\ell,u} } P(\u{t}, \u{x}) \prod_{j=1}^q \Big( \hat\go_{t_j,x_j} + \gl'(\gb) \ind_{\{S_{t_j}=x_j\}}  \Big)\right)^2 \right]\, .
\end{equation}
We have
\begin{equation*}
\prod_{j=0}^q \Big( \hat\go_{t_j,x_j} + \gl'(\gb) \ind_{\{S_{t_j}=x_j\}}  \Big) = \sum_{r=0}^{q+1} (\gl'(\gb))^{r}  \sum_{A\subset \{0,\ldots , q\}, |A|=r} \prod_{k\in A} \ind_{\{S_{t_k}=x_k\}} \prod_{j\in  \{0,\ldots , q\}\setminus A}  \hat\go_{t_j,x_j} \, .
\end{equation*}
Therefore, when taking the square in \eqref{eq:var1}, one obtains a sum over $\u{x},\u{x}'\in (\tilde\gL_0)^{q+1}$,$\u{t},\u{t}' \in J_{\ell,u}$ of $P(\u{t},\u{x}) P(\u{t}',\u{x}')$ times 
\begin{equation*}
\sum_{r=0}^{q+1} \sum_{r'=0}^{q+1} (\gl'(\gb))^{r+r'} \sumtwo{A\subset \{0,\ldots , q\}, |A|=r}{B\subset \{0,\ldots , q\}, |B|=r'} \prodtwo{k\in A}{k'\in B }   \ind_{\{S_{t_k}=x_k\}}  \ind_{\{S_{t'_{k'}}=x'_{k'}\}}  \prodtwo{j\in  \{0,\ldots , q\}\setminus A}{j'\in  \{0,\ldots , q\}\setminus B}  \hat\go_{t_j,x_j}  \hat\go_{t'_{j'},x'_{j'}}   \, .
\end{equation*}
When taking the expectation under $\bbE^S$, the only non-zero terms are those with $r=r'$ and 
\begin{equation}
 \{ (t_j,x_j) \ | \ j\in \{0,\dots, q\} \setminus A\}:= \{ (t_{j'},x_{j'}) \ | \ j'\in \{0,\dots, q\} \setminus B\} \, .
\end{equation}
Note that the term $r=q+1$ corresponds exactly to $\bbE^S[X]^2$, and is therefore canceled when considering the variance.

\medskip
We need to introduce some additional notations to reorganize the sum: 
\[\cS_{m} := \big\{ \u{s} = (s_0,\ldots , s_{m}) \, ; \, 1\leq  s_0< \cdots< s_m  \leq \ell  \, ; \, s_{m}-s_1 \leq m u \big\} \, .\]
Also, for $r\geq 1$,  and any $\u{s}\in \cS_{q-r} $, a set of  $\u{s}$-\emph{compatible} $\u{t}$:
\[\cT_{r} (\u{s}) = 
\big\{  \u{t} = (t_1, \ldots , t_{r})   \, ;\, 1\leq  t_1 < \cdots< t_{r}  \leq \ell \, ;\, \u{s}\cdot\u{t}\in J_{\ell,u}  \big\} \, .\]
where $\u{s}\cdot\u{t}$ denotes the ordered sequence with $q+1$ elements which is obtained by reordering the values of the $s_i$'s and $t_j$'s.
Hence, isolating the term $r=0$, and recalling $\bbE^S[(\hat\go_{t,x})^2]\leq 1+(\gep^3/2) \leq 2$, we obtain
\begin{align}
\label{eq:var2}
\Var^S [ X ]  \leq \, \frac{(1+(\gep^3/2))^{q+1}}{4R^2\ell^2 D(u)^q} &\sum_{\u{x}\in (\tilde\gL_0)^{q+1} \, ,\, \u{t}\in J_{\ell,u}} P(\u{t},\u{x})^2  \notag\\
+  \frac{1}{4R^2\ell^2 D(u)^q} & \sum_{r=1}^{q} (\gl'(\gb))^{2r} 2^{q+1-r}   \notag\\
&\sum_{\u{s} \in \cS_{q-r} }  \sum_{\u{x}\in (\tilde\gL_0)^{q-r+1} }
 \sum_{\u{t}, \u{t}' \in \cT_{r}(\u{s})} P\big( (\u{s},\u{x}), (\u{t},  \u{S}^{(\u{t})} ) \big) P \big( (\u{s}, \u{x}) ,( \u{t}', \u{S}^{(t')}) \big) \, ,
\end{align}
where we used the notation
\[P \big((\u{s},\u{t}) , (\u{x}, \u{z}) \big)  = P(\u{s} \cdot \u{t} , \u{x} \cdot\u{z} ) ,\]
where $\u{x}\cdot \u{z}$ 
is defined (a bit improperly since the definition depends on $\u{s}$ and $\u{t}$) as
\begin{equation}
 (\u{x}\cdot\u{z})_k :=
 \begin{cases} x_i \text{ if } (\u{s} \cdot \u{t})_k=s_i,\\
                     z_j   \text{ if } (\u{s} \cdot \u{t})_k=t_j .
                   \end{cases}
 \end{equation}
The first term, according to \eqref{eq:isivar} is smaller than $(1+(\gep^3/2))^{q+1}$.
It remains to control the second term.
First, we restrict the summation over $x_0$ by showing that 
\begin{equation}
\label{eq:restrictx0}
 |x_0-S_{s_0}| \ge \sqrt{u} (\log u)^2 \quad \Rightarrow \quad  P \big( (\u{s}, \u{x}) ,( \u{t}, \u{S}^{(t)}) \big)=0.
\end{equation}
Indeed for $P$ to be positive, all coordinates $\u{x} \cdot\u{S}^{(\u{t})}$ must be within distance 
$q\rho(u)$ of one another.
However, \eqref{eq:hypo} implies that 
\begin{equation}
  |x_0-S_{t_0}|\ge  \sqrt{u} (\log u)^2-\sqrt{u} \log u > q\rho(u),
\end{equation}
provided that $\gb$ is small enough.
For the other values of $x_0$ we will make use of the following bound

\begin{lemma}\label{lem:checkthesum}
 There exists a constant $C_2$ such that for any realization of $S$ we have, for any $\u{s}\in \cS_{q-r}$ 
 \begin{equation}
 \sum_{\u{t} \in \cT_{r}(\u{s})} P\big( (\u{s},\u{x}), (\u{t},  \u{S}^{(\u{t})} ) \big)\le 2^q ( C_2 \log \ell)^r\prod_{i=1}^{q-r} p(s_i-s_{i-1},x_i-x_{i-1}).
\end{equation}
\end{lemma}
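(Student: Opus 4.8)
\textbf{Plan for the proof of Lemma~\ref{lem:checkthesum}.}
The goal is to sum over the $\u{s}$-compatible sequences $\u{t}=(t_1,\dots,t_r)$ the quantity $P\big((\u{s},\u{x}),(\u{t},\u{S}^{(\u{t})})\big)$, which by definition is a product of $q$ random-walk transition kernels $p(\cdot,\cdot)$ along consecutive entries of the merged sequence $\u{s}\cdot\u{t}$, each truncated by the indicator $\ind_{\{|\Delta x|\le \rho(\Delta t)\}}$. The first step is to drop the truncation indicators (they only make things smaller, since $P\le$ the untruncated product) and to rewrite the product over the merged sequence. The entries of $\u{t}$ split the ``$\u{s}$-intervals'' into sub-steps; a step of the merged sequence is either a piece of an $\u{s}$-gap that contains no $t_j$ (giving a factor $p(s_i-s_{i-1},x_i-x_{i-1})$ when it coincides with a full $\u{s}$-gap) or it involves $\u{t}$-entries whose spatial positions are pinned to the trajectory, namely $x'_{j}=S_{t_j}$.

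The second step is the actual summation. For each $t_j$ with $S_{t_j}$ a fixed value, we sum $\sum_{t}p(\,\cdot\,)$ over the allowed time increments; using the local CLT bound $p(t,x)\le c_1/(1+t)$ from \eqref{eq:lclsrw}, each such one-dimensional sum over an interval of length at most $u$ (or at most $\ell$, being generous) contributes a factor bounded by $\sum_{i=1}^{\ell} c_1/(1+i)\le C_2\log\ell$. There are exactly $r$ such $\u{t}$-entries to sum out, which produces the factor $(C_2\log\ell)^r$. The remaining $q-r$ steps that are ``pure $\u{s}$-gaps'' reassemble into $\prod_{i=1}^{q-r}p(s_i-s_{i-1},x_i-x_{i-1})$. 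The bookkeeping factor $2^q$ absorbs the combinatorial choices of how the $t_j$'s interleave with the $s_i$'s (which $\u{s}$-gap each $t_j$ falls into, and the resulting regrouping of kernel factors), together with any constant slack from bounding $p(t,x)\le c_1/(1+t)$ vs.\ the exact kernel and from extending truncated sums to the full range $\{1,\dots,\ell\}$: there are at most $2^q$ ways the merged sequence can be decomposed this way since each of the $q$ ``slots'' is labelled as $\u{s}$-type or $\u{t}$-type.

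The main obstacle is the careful handling of the merged-sequence structure when several $\u{t}$-entries land in the same $\u{s}$-gap or at the extremities: in that case the kernel factors do not factor cleanly along $\u{s}$-gaps, and one must chain the summations $\sum_{t_j}p(t_j-\text{prev},S_{t_j}-\text{prev pos})\cdot p(\text{next}-t_j,\dots)$ and use a Chapman--Kolmogorov / triangle-type argument together with the uniform bound $p\le c_1/(1+t)$ to decouple. Concretely, in a block of consecutive $\u{t}$-entries one repeatedly bounds $\sum_{t}p(a-t,\cdot)\,p(t-b,\cdot)\le (C_2\log\ell)\cdot\max_{x}p(a-b,x)$-type estimates, peeling off one $t_j$ at a time; each peel costs one factor $C_2\log\ell$ and leaves the trajectory positions $S_{t_j}$ absorbed. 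One must also check that this is uniform in the realization of $S$, which it is, since the bound \eqref{eq:lclsrw} on $p$ is uniform in $x$ and the positions $S_{t_j}$ only enter through fixed (trajectory-dependent) values that get summed away. Once the interleaving combinatorics is organized, the rest is the routine harmonic-sum estimate $\sum_{i\le\ell}1/i\le C_2\log\ell$.
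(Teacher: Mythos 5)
Your structural outline matches the paper's: decompose the sum by the interleaving pattern of the $\u{t}$-coordinates among the $\u{s}$-coordinates (at most $\binom{q+1}{q-r+1}\le 2^q$ patterns), peel off the $t_j$'s one at a time gaining a factor of order $\log\ell$ each, and leave behind $\prod_{i=1}^{q-r} p(s_i-s_{i-1},x_i-x_{i-1})$. That is exactly the skeleton. However, the key inner estimate you propose is written in a form that does not close the induction.

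You bound a peel by
\begin{equation*}
\sum_{t}p(a-t,\cdot)\,p(t-b,\cdot)\le (C_2\log\ell)\cdot\max_{x}p(a-b,x),
\end{equation*}
but $\max_x p(a-b,x)\asymp c_1/(1+(a-b))$; using this destroys the spatial endpoint data, and you can no longer recover the actual kernel $p(s_i-s_{i-1},x_i-x_{i-1})$ on the right side of the claimed bound. What is needed, and what the paper states and proves as a separate lemma, is a \emph{ratio} estimate with the genuine kernel both upstairs and downstairs, namely a statement of the form
\begin{equation*}
\frac{\bP(S_n=z\,;\,S_t=x)}{\bP(S_t=x)}\le \frac{C_2}{1+\min(n,t-n)}
\end{equation*}
uniformly over the relevant (constrained) intermediate points $z$, which after summation over $t$ yields
\begin{equation*}
\sum_{b<t<a} p(t-b,\,z_t-z_b)\,p(a-t,\,z_a-z_t)\le (C_2\log\ell)\, p(a-b,\,z_a-z_b).
\end{equation*}
This is a two-point local CLT comparison: the chained kernels sum (over $z$) to exactly $p(t,x)$, so a pointwise bound comparing the product to $p(t,x)$ rather than to $\sup_x p(t,x)$ is what's required, and it is not a corollary of the crude bound $p(t,x)\le c_1/(1+t)$. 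The paper proves it by rotating $45^\circ$ to reduce to two independent one-dimensional nearest-neighbour walks and then carrying out an explicit Stirling estimate on binomial ratios. This Stirling argument is the real technical content of the lemma, and your sketch replaces it with a bound that is too weak to yield the stated inequality.

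A minor omission: the $t_j$'s that fall before $s_0$ or after $s_{q-r}$ are not sandwiched between two pinned $\u{s}$-endpoints, and for these the naive bound $p(t,x)\le c_1/(1+t)$ does suffice, contributing the extra harmonic-sum factors. The paper handles these boundary blocks separately from the interior ones, where the ratio lemma is essential; your proposal does not distinguish the two cases.
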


This implies, together with \eqref{eq:restrictx0}, that for any $S$ verifying \eqref{eq:hypo}
\begin{multline}
\label{bigsum1}
 \sum_{\u{s} \in \cS_{q-r} }  \sum_{\u{x}\in (\tilde\gL_0)^{q-r+1} } 
 \left( \sum_{\u{t}\in \cT_{r}(\u{s})} P\big( (\u{s},\u{x}), (\u{t},  \u{S}^{(\u{t})} ) \big) \right)^2 \\
 \le   4^q ( C_2 \log \ell)^{2r}  \sum_{\u{s} \in \cS_{q-r} }  \sumtwo{\u{x}\in (\tilde\gL_0)^{q-r+1} }{ |x_0-S_{s_0}| \le \sqrt{u}(\log u)^2} 
 \left( \prod_{i=1}^{q-r} p(s_i-s_{i-1},x_i-x_{i-1}) \right)^2.
\end{multline}
Summing over all possible $0\leq s_0\leq \ell$ and $x_0$, we obtain
\begin{multline}
\label{bigsum2}
 \sum_{\u{s} \in \cS_{q-r} } \sumtwo{\u{x}\in (\tilde\gL_0)^{q-r+1} }{ |x_0-S_{s_0}| \le \sqrt{u}(\log u)^2} 
 \prod_{i=1}^{q-r} p(s_i-s_{i-1},x_i-x_{i-1})^2 \\
 \le \ell \, 2 u (\log u)^4\times\bigg( \sumtwo{x\in \bbZ^2}{1\le t \le \ell} p(t,x)^2 \bigg)^{q-r}
 \le   2 \ell u (\log \ell)^{4 + q-r}\, ,
\end{multline}
where we used that $D(\ell)\leq \log \ell$ if $\ell$ is large enough, see \eqref{def:D}.
Hence, collecting \eqref{bigsum1}-\eqref{bigsum2}, the second term in \eqref{eq:var2} is bounded by
\begin{multline}\label{thisis}
  \frac{1}{4 R^2\ell^2 D(u)^q}  \sum_{r=1}^{q} (\gl'(\gb))^{2r} 2^{q+1-r} 
 \times 4^q (C_2 \log \ell)^{2r} \, 2 \ell u\, (\log \ell)^{4+q+r} \\
  \le  \frac{4^{2q} C_2^{2q}}{ R^2} \, \left(\frac{\log \ell}{D(u)} \right)^q  (\log \ell)^4  u \ell^{-1} \sum_{r=1}^{q}  \gb^{2r} (\log \ell)^{r} 
  \le  \frac{q (C_3)^q}{R^2} (\log \ell )^4 \ell^{-\gep^2} \, ,
\end{multline}
where we used that $\gl'(\gb)\leq 2\gb$ (see \eqref{eq:boundsonevar}) in the first inequality. In the second inequality, we used that $\gb^2 \log \ell \leq 2\pi$ from the definition of $\ell$ \eqref{def:ell}, that $D(u)\geq \frac14 \log \ell$ if $\ell$ is large enough (and $\gep<1/10$) so that $C_3:= 128\, C_2^2 \pi$ ;  we also used that $u\leq \ell^{1-\gep^2}$.
Since $q, (C_3)^q$ and $(\log \ell)^4$ grow slower than any power of $\ell$, the r.h.s.\ of \eqref{thisis} tends to zero as $\ell$ goes to infinity (or $\gb\to 0$).
Therefore, if $\gb$ is small enough, \eqref{eq:var2} implies
\begin{equation}
\Var^S [ X ]  \leq  (1+(\gep^3/2))^{q+1} + 1 \leq (1+\gep^3)^q\, .
\end{equation}
\qed

\subsection{Proof of Lemma \ref{lem:checkthesum}}

First of all, we divide the sum according to the way the $t$ coordinates are interlaced with the $s$ coordinates: for any given $\u{s}\in \cS_{q-r}$, we have
\begin{multline}
\label{mondieu}
 \sum_{\u{t} \in \cT_{r}(\u{s})} P\big( (\u{s},\u{x}), (\u{t},  \u{S}^{(\u{t})} ) \big) \leq  \sum_{0\le m_0< \ldots < m_{q-r} \leq r}\\
  \prod_{i=1}^{q-r} \sum_{s_{i-1}< t_{m_i+1}<\cdots<t_{m_{i+1}}< s_i} P\big( (s_{i-1},t_{m_i+1},\ldots, t_{m_{i+1}},s_i) , (x_{i-1}, S_{t_{m_i+1}},\cdots ,S_{t_{m_{i+1}}} , x_i)\big) \\
  \times \sum_{0< t_1<\cdots<t_{m_0}< s_0} P\big( (t_1,\ldots,t_{m_0}, s_0) , ( S_{t_1},\cdots ,S_{t_{m_0}} , x_0)\big) \\
  \times    \sum_{s_{q-r}< t_{m_{q-r}+1}<\cdots<t_{r}< \ell } P\big( (s_{q-r},t_{m_{q-r}},\ldots,t_r ) , (x_{q-r}, S_{t_{m_{q-r}}},\cdots ,S_{t_{r}} \big)\, .
\end{multline}
where we used the (rather unusual, but convenient) convention that if $m_{i+1}=m_i$ then
\begin{multline}\label{eq:conveda}
 \sum_{s_{i-1}< t_{m_i+1}<\cdots<t_{m_{i+1}}< s_i} P\big( (s_{i-1},t_{m_i+1},\ldots, t_{m_{i+1}},s_i) , (x_{i-1}, S_{t_{m_i+1}},\cdots ,S_{t_{m_{i+1}}} , x_i)\big)\\
 = P\big( (s_{i-1},s_i) , (x_{i-1},x_i)\big)=  p(s_i-s_{i-1},x_i-x_{i-1})\ind_{\{|x_i-x_{i-1}|\le \rho(s_i-s_{i-1})\}}
\end{multline}
Here $m_i$ designates the index of the last $t$-coordinate before $s_i$: there are $m_0$ $t$-coordinates before $s_0$, $m_{i}-m_{i-1}$ between $s_{i-1}$ and $s_{i}$,
and $r- m_{q-r}$ after $s_{q-r}$.  We also isolated the contribution of the $t$-coordinates smaller than $s_0$ and of those larger than $s_{q-r}$.
Note that there are $\binom{q+1}{q-r+1}\le 2^{q}$ possible interlacements $0\leq m_0\leq \cdots\leq m_{q-r} \leq r $, and we will bound the contribution of each of them separately.

\medskip
First, we deal with the contribution of the $t$ coordinates greater than $s_{q-r}$. We use \eqref{eq:lclsrw} to obtain
\begin{multline}
\label{tgreater}
\sum_{s_{q-r}< t_{m_{q-r}+1}<\cdots<t_{r}< \ell } P\big( (s_{q-r},t_{m_{q-r}+1},\ldots,t_r ) , (x_{q-r}, S_{t_{m_{q-r}}},\ldots ,S_{t_{r}} \big) \\
\leq \sum_{s_{q-r}< t_{m_{q-r}}<\cdots<t_{r}< \ell} \prod_{k=m_{q-r}+1}^{r} \frac{c_1}{1+t_k-t_{k-1}} \leq \big( c_1 \log \ell\big)^{r-m_{q-r}}\, .
\end{multline}
By symmetry, the same argument yields
\begin{equation}
\label{tsmaller}
\sum_{0 < t_1<\cdots<t_{m_0}< s_0} P\big( (t_1,\ldots,t_{m_0}, s_0) , ( S_{t_1},\cdots ,S_{t_{m_0}} , x_0)\big) \leq \big( c_1 \log \ell\big)^{m_0}\, .
\end{equation}
Then, we deal with the inner terms.
We have to prove that there exists a constant $C_2$ such that, 
for any $1\leq s \leq \ell$, $x\in\bbZ^2$, and any sequence $(V_n)_{n\geq 0}$ with $V_0=0$ and $V_s=x$,
\begin{equation}
\label{eq:toprove}
\sum_{ 0 < t_{1}<\cdots<t_{k}< s}  P\big( (0,t_{1},\ldots, t_{k},s) , (V_0=0, V_{t_{1}},\cdots ,V_{t_{k}} , V_s=x)\big) \\
\leq (C_2 \log \ell)^{k} p(s,x) \, .
\end{equation}
This, combined with \eqref{tgreater}-\eqref{tsmaller} and plugged in \eqref{mondieu} completes the proof of Lemma \ref{lem:checkthesum}.
To prove \eqref{eq:toprove}, the main ingredient is the following inequality.
\begin{lemma}\label{eq:techlem}
There exists a constant $C_2$ such that for all $t\ge 1$, for all $x$ satisfying $|x|\le t/2$, and
for all $0\le n \le t$ 
 \begin{equation}
  \sup_{\{z\in \bbZ^2 \ / \ |z|\le n/2, |x-z|\le (t-n)/2\}} \frac{\bP\big( S_n=z ; S_t=x  \big)}{ \bP \big( S_t=x \big)}\le \frac{C_2}{1+\min(n, t-n)}. 
 \end{equation}
\end{lemma}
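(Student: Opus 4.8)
The plan is to prove Lemma \ref{eq:techlem} by a direct application of the local central limit theorem, reducing the ratio of probabilities to a ratio of Gaussian densities (with a correction for parity). First I would use the Markov property to write, for $z$ in the admissible range,
\[
\bP\big( S_n = z\, ;\, S_t = x\big) = p(n,z)\, p(t-n, x-z),
\]
so that the quantity to bound is $p(n,z)\, p(t-n,x-z)/p(t,x)$. The constraint $|x|\le t/2$ guarantees that $x$ is in the "bulk" of the range of $S_t$, where the local CLT gives a matching lower bound $p(t,x)\ge c/t$ (up to parity: recall $p(t,x)>0$ only when $t\equiv |x| \bmod 2$, which is automatic here since $P$ is built from genuine random-walk transitions). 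Similarly $z$ and $x-z$ lie in the bulk of the ranges of $S_n$ and $S_{t-n}$ respectively, thanks to the hypotheses $|z|\le n/2$ and $|x-z|\le (t-n)/2$, so the local CLT gives $p(n,z)\le c_1/(1+n)$ and $p(t-n,x-z)\le c_1/(1+(t-n))$. Combining these three estimates yields, with $m := \min(n, t-n)$,
\[
\frac{\bP\big( S_n=z\, ;\, S_t=x\big)}{\bP\big( S_t=x\big)} \le \frac{c_1^2\, t}{c\,(1+n)(1+(t-n))} \le \frac{C_2}{1+m},
\]
using that $t \le 2\max(n,t-n) \le 2(1 + \max(n,t-n))$ and $\max(n,t-n)/[(1+n)(1+(t-n))] \le 1/(1+m)$ up to a constant. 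Taking the supremum over the admissible $z$ and absorbing constants into $C_2$ finishes the bound.

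The one genuine subtlety is the parity issue inherent to the simple random walk on $\bbZ^2$: $p(s,x) = 0$ whenever $s \not\equiv |x| \pmod 2$, so a naive "lower bound $p(t,x) \ge c/t$" is false in general. The point to make carefully is that within the sum \eqref{eq:toprove} all the arguments are of the form $p(t_{j+1}-t_j, V_{t_{j+1}} - V_{t_j})$ with $V$ a fixed random-walk path, hence each factor is automatically zero unless the parities match; so one may restrict attention to $(n,z)$ and $(t-n, x-z)$ with correct parity, in which case the local CLT lower bound $p(t,x) \ge c/(1+t)$ does hold (for $|x| \le t/2$ and $t \equiv |x| \bmod 2$). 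I would either state this parity-restricted version of the local CLT as a standard fact (obtainable from Stirling as already noted after \eqref{eq:lclsrw}), or work with the "two-step" kernel $p(2t,0)$ when convenient; in any case this is routine and the main obstacle is purely bookkeeping the parities rather than any real analytic difficulty. The uniformity in $z$ over the admissible set is immediate since all the bounds above are uniform in $z$ on that set.
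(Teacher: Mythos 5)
The proposal has a genuine gap in Step 3, and the gap is not a matter of bookkeeping parities but a real analytic obstruction. You claim that the hypothesis $|x|\le t/2$ places $x$ ``in the bulk'' and yields the lower bound $p(t,x)\ge c/(1+t)$. This is false: the local CLT gives $p(t,x)\asymp t^{-1}e^{-|x|^2/t}$ (on the correct parity class), and the prefactor $t^{-1}$ only governs the magnitude when $|x|=O(\sqrt t)$. Under the hypothesis $|x|\le t/2$, one can have $|x|$ \emph{linear} in $t$, in which case $p(t,x)$ decays \emph{exponentially} in $t$ (e.g.\ in the $1$-d reduction with $\tilde x=t/2$, Stirling gives $\bP(\tilde X_t=t/2)\sim c\, t^{-1/2}(3^{3/4}/2)^{-t}$). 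So the inequality $p(t,x)\ge c/(1+t)$ has no uniform constant under only $|x|\le t/2$, and plugging a false lower bound into the denominator breaks the rest of the chain.

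What makes the lemma true is a cancellation between the exponentially small Gaussian factors in numerator and denominator, not their absence. Heuristically,
\begin{equation*}
\frac{p(n,z)\,p(t-n,x-z)}{p(t,x)}\;\approx\;\frac{c\, t}{n(t-n)}\,\exp\!\Big(-\tfrac{|z|^2}{n}-\tfrac{|x-z|^2}{t-n}+\tfrac{|x|^2}{t}\Big),
\end{equation*}
and the exponent is $\le 0$ for all $z$ (it vanishes at $z=nx/t$); only after establishing this does the polynomial prefactor $t/\big(n(t-n)\big)\le 2/\big(1+\min(n,t-n)\big)$ become relevant. The paper makes this rigorous by the rotation to two independent $1$-d walks, writing the ratio as $\binom{n}{j}\binom{t-n}{k-j}/\binom{t}{k}$, and then using Stirling to isolate an explicit ``entropy'' factor that is shown, by one-variable calculus in $j$, to be $\le 1$ with maximum at $j=kn/t$; the remaining square-root term gives the $n^{-1/2}$. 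To repair your proof you would need to replace the false polynomial lower bound on $p(t,x)$ with an argument that the Gaussian exponents cancel — either via the Stirling/binomial computation as in the paper, or via a sharp two-sided local CLT with explicit Gaussian factors together with the convexity inequality $|z|^2/n+|x-z|^2/(t-n)\ge|x|^2/t$. As written, the proposal does not constitute a proof.
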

This is a standard but technical estimate (note that the assumption on $z$ is not needed but simplifies the proof).
We use Lemma \ref{eq:techlem} now to prove \eqref{eq:toprove} by induction, and postpone its proof to the end of the section. 

\medskip
Note that \eqref{eq:toprove} for $k=0$ follows from our convention \eqref{eq:conveda}. For $k\ge 1$ note that
\begin{multline}
\label{aftertechlem}
  p(s-t_k,V_s-V_{t_k})p(t_k-t_{k-1},V_{t_k}-V_{t_{k-1}})\ind_{\{|V_{t_k}-V_{t_{k-1}}|\le \frac{1}{2}|t_{k}-t_{k-1}| \ ; \ |V_s-V_{t_{k}}|\le \frac{1}{2}|s-t_{k}| \}}
  \\ \le
  \frac{C_2}{1+\min(s-t_k,t_k-t_{k-1})} p(s-t_{k-1},V_s-V_{t_{k-1}}) \ind_{\{|V_s-V_{t_{k-1}}|\le \frac{1}{2}|s-t_{k-1}| \}}.
\end{multline}
Using the convention that $t_0=0$, $t_{k+1}=s$, it therefore gives the following upper bound on \eqref{eq:toprove}
\begin{multline}
 \sum_{t_0=0<t_1<\dots<t_k<s=t_{k+1}}
\prod_{i=0}^{k} p(t_{i+1}-t_i,V_{t_i}-V_{t_{i-1}})\ind_{\{|V_{t_i}-V_{t_{i-1}}|\le \frac{1}{2}|t_{i}-t_{i-1}|\}}\\
\le  2 C_2 \log(s)  \sum_{0<t_1<\dots<t_{k-1}<s}
\prod_{i=0}^{k-1} p(t_{i+1}-t_i,V_{t_i}-V_{t_{i-1}})\ind_{\{|V_{t_i}-V_{t_{i-1}}|\le \frac{1}{2}|t_{i}-t_{i-1}|\}}\, ,
\end{multline}
where we simply summed \eqref{aftertechlem} over $t_k$.
We can then conclude by induction.

\begin{proof}[Proof of Lemma \ref{eq:techlem}]
First, we simplify the problem thanks to a rotation: if we denote $S_n:=(X_n,Y_n)$, then letting 
$\tilde X_n:= X_n-Y_n$, $\tilde Y_n:= X_n+Y_n$, we obtain that $\tilde X_n$ and $\tilde Y_n$ are two independent symmetric nearest-neighbor random walks on $\bbZ$.
Then, writing $x=(x_1,x_2)$, one has that $\{S_t=x\} = \{\tilde X_{t} = x_1-x_2 \, ;\, \tilde Y_t = x_1+x_2 \}$.
Therefore, Lemma \ref{eq:techlem} reduces to a statement on the nearest-neighbor random walk on $\bbZ$: 
we only need prove that there exists a constant $c_2$ such that, for all $t\ge 2$, and all $\tilde x$ 
satisfying $|\tilde x|\leq t/2$, we have for all $1\leq n\leq t-1$,
 \begin{equation}
  \sup_{\{ \tilde z\in \bbZ \ / \ |\tilde z|\le n/2, |\tilde x- \tilde z|\le (t-n)/2 \}} \frac{\bP\big( \tilde X_n=\tilde z \big) \bP \big( \tilde X_{t-n}= \tilde x -\tilde z  \big)}{ \bP \big( \tilde X_t= \tilde x \big)}\le \frac{c_2}{\sqrt{\min(n, t-n)}}. 
 \end{equation}
This is equivalent to proving that for all $n$, $t$, and $k$ and $j$ satisfying 
\begin{equation}\label{eq:porra}
t/4 \le k\le 3t/4\, ,\quad
n/4 \le j \le 3n/4\, ,\quad
(t-n)/4 \le k-j \le 3(t-n)/4\, ,
\end{equation}
we have
\begin{equation}
 \frac{\binom{n}{j}\binom{t-n}{k-j}}{ \binom{t}{k}}\le  \frac{c_2}{\sqrt{\min(n, t-n)}}. 
\end{equation}
By symmetry, we only have to prove this inequality for $n\leq t/2$. 
Using Stirling's formula for all the binomial coefficients we obtain that there exists a constant $C>0$ such that 

\begin{multline}
  \frac{\binom{n}{j}\binom{t-n}{k-j}}{ \binom{t}{k}}\le C \sqrt{\frac{n(t-n)k(t-k)}{j(n-j)(k-j)(t-n-(k-j))t}} 
  \\ \times \left(\frac{n}{j}\right)^j\left(\frac{n}{n-j}\right)^{n-j}\left(\frac{t-n}{k-j}\right)^{k-j}\left(\frac{t-n}{t-n-(k-j)}\right)^{t-n-k+j}
  \left(\frac{k}{t}\right)^k  \left(\frac{t-k}{t}\right)^{t-k}
\end{multline}
Note that because of our assumptions \eqref{eq:porra} and $n\le t/2$, the square-root term is up to a multiplicative constant equivalent to $n^{-1/2}$.
Hence we just need to show that the factor on the second line is smaller than one.
If one considers $j$ as a continuous variable, elementary calculus implies that this term is maximized for $j=kn/t$, and that this maximal value is indeed $1$.
\end{proof}

\section{Upper bound}
\label{sec:upper}

By a superadditivity argument (see \cite[Proposition 2.5]{cf:CSY} and its proof) we have 
\begin{equation}
\label{superad}
 \gD\tf(\gb)\le -\frac{1}{N}\bbE \left[ \log \hat Z^{\gb,\go}_N\right] \, .
\end{equation}
We use this inequality for some the largest possible choice of $N:=N_{\gb,\gep}$ for which $\hat Z^{\gb,\go}_N$ is still of constant order (recall that it has mean one).

After a straightforward second moment computation, see Section~\ref{sec:secondmoment}, one realizes that the right choice is $N_{\gb,\gep}:=\exp\left((1-\gep)\pi \gb^{-2}\right)$, so that $\bbE\big[ \big(\hat Z^{\gb,\go}_{N_{\gb,\gep}} \big)^2\big]$ is uniformly bounded by a constant, see \eqref{secondmoment}.
This intuition is strengthened by the work (in preparation) of Caravenna, Sun and Zygouras \cite{cf:CSZ3} which proves that, when $\gb\to 0$, $\log \hat Z^{\gb,\go}_{N_{\gb,\gep}} $ converges in distribution towards a normal random variable whose expectation and variance depend only on $\gep$.

\smallskip
But to be able to use \eqref{superad}, we need to prove that $\log \hat Z^{\gb,\go}_{N_{\gb,\gep}}$ is concentrated around its mean. 
We stress that obtaining such a concentration result is not straightforward: standard techniques (e.g.\ using martingales)
give that the variance of $\log \hat Z^{\gb,\go}_N$ is bounded above by $CN$ (see \cite[Section 6]{cf:CSY}).
The reader can check that by using this bound as done in \cite[Section 7]{cf:L}, we would be off by a factor $\frac12$ for our bound on $\log \Delta \tf(\gb)$.
Obtaining better bounds for the variance of $\log \hat Z^{\gb,\go}_N$ is in general a very difficult problem and the best known general improvement 
are by $\log$ factors (see e.g.~\cite{cf:AZ}). 

However, in our context, the temperature depends on $N$ and we can use this fact to obtain sharper concentration results. 
To do so, we borrow ideas from \cite{cf:CTT}, and we obtain a uniform bound on the tail of $-\log \hat Z^{\gb,\go}_{N_{\gb,\gep}}$.

\subsection{Concentration of $\log \hat Z^{\gb,\go}_{N_{\gb,\gep}}$}

To simplify the exposition, we present the proof first in the case where $\go$ is bounded, and then quickly adapt it  to the general case with a suitable truncation procedure.

The boundedness is used to have a convex concentration inequality which does not depend on the number or variable considered.
Let us start with a convex concentration inequality for bounded variables.
It follows from \cite[Lemma 3.3]{cf:CTT} and a more usual concentration inequality \cite[Corrolary 4.10]{cf:Ledoux}

\begin{lemma}\label{lem:concconv}
There exists a constant $C_1$ such that for any $m\ge 0$,
for any  sequence of i.i.d.\ variables    $\eta= \left(\eta_1,\dots,\eta_m\right)$ satisfying
\begin{equation}\label{boundedness}
 \bbP\left( |\eta_1|<K \right)=1,
\end{equation}
and any convex set $A\subset \bbR^m$, we have
  \begin{equation}
 \bbP\left( \eta \in A\right) \bbP\left( d(\eta,A) >t \right) \le 2e^{-\frac{t^2}{ C_1 K^2}}.
 \end{equation}
\end{lemma}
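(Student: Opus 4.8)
The plan is to combine a Talagrand-type convex distance concentration inequality with the observation of \cite[Lemma 3.3]{cf:CTT} that allows one to pass from concentration ``around a median'' or ``around a set'' to the convex concentration statement as phrased. First I would invoke \cite[Corollary 4.10]{cf:Ledoux}: since the $\eta_i$ are i.i.d.\ and bounded in $[-K,K]$, the product measure on $[-K,K]^m$ satisfies Talagrand's convex distance inequality, which gives, for any measurable $A\subset\bbR^m$ with $\bbP(\eta\in A)>0$,
\[
\bbP\big(d_c(\eta,A)\geq s\big)\leq \frac{1}{\bbP(\eta\in A)}\, e^{-s^2/4},
\]
where $d_c$ denotes the convex (Talagrand) distance. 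The scaling by $K$ enters because rescaling $\eta_i\mapsto \eta_i/K$ reduces to the case of variables supported in $[-1,1]$; equivalently one works with the normalized distance and absorbs the factor $K$ into the constant $C_1$.

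The second ingredient is the comparison between the convex distance $d_c$ and the Euclidean distance $d$ to a convex set, which is exactly the content of \cite[Lemma 3.3]{cf:CTT}: when $A$ is convex, $d(\eta,A)\leq \sqrt{\sum_i K^2}\cdot(\text{something})$ is the wrong direction, so the right statement to quote is that for convex $A$ one has $d(\eta,A)\leq K\, d_c(\eta,A)$ pointwise (the Euclidean distance to a convex set is controlled by the convex distance precisely because the supremum defining $d_c$ can be taken over the unit ball, and boundedness by $K$ converts the per-coordinate weights into a uniform Euclidean bound). Combining the two displays with $s=t/K$ then yields
\[
\bbP(\eta\in A)\,\bbP\big(d(\eta,A)>t\big)\leq \bbP(\eta\in A)\,\bbP\big(d_c(\eta,A)>t/K\big)\leq e^{-t^2/(4K^2)},
\]
and after adjusting the constant to absorb the factor $2$ in the statement one gets the claimed bound with $C_1=4$ (any universal constant works).

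Concretely, the steps in order are: (1) reduce to $K=1$ by rescaling; (2) cite Talagrand's convex distance inequality in the form of \cite[Corollary 4.10]{cf:Ledoux} to get $\bbP(\eta\in A)\,\bbP(d_c(\eta,A)>s)\leq e^{-s^2/4}$; (3) use \cite[Lemma 3.3]{cf:CTT} to bound the Euclidean distance to the convex set $A$ by (a constant times) the convex distance, so that the event $\{d(\eta,A)>t\}$ is contained in $\{d_c(\eta,A)>ct\}$ for a universal $c>0$; (4) substitute and undo the rescaling to recover the factor $K^2$ in the exponent, picking $C_1$ large enough (and the prefactor $2$) to make the inequality clean.

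I expect the main obstacle to be purely expository rather than mathematical: making precise the comparison in step (3) between the Talagrand convex distance and the ordinary Euclidean distance to a convex set, since \cite[Lemma 3.3]{cf:CTT} is stated in a slightly different normalization and one must check that the boundedness hypothesis \eqref{boundedness} is exactly what is needed to make the per-coordinate weights in the definition of $d_c$ translate into a uniform constant $K$ in front of the Euclidean distance, with no hidden dependence on the dimension $m$. Once that comparison is in hand, the rest is a one-line substitution. The dimension-independence — which is the whole point, since in the application $m$ will be the (huge) number of environment variables in a box of size $N_{\gb,\gep}$ — is automatic from Talagrand's inequality and just needs to be flagged.
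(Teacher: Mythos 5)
Your route---Talagrand's convex-distance inequality plus a pointwise comparison $d(\eta,A)\le CK\,d_c(\eta,A)$ for convex $A$---does give the stated bound and is arguably the more direct path, but it is not the decomposition the paper cites, and the attributions are almost certainly reversed. Ledoux's Corollary~4.10 is the concentration of convex $1$-Lipschitz functions around their median (the paper even calls it ``a more usual concentration inequality''), not the convex-distance inequality itself, which appears as a theorem earlier in the same chapter; and CTT's Lemma~3.3 is an elementary device for passing from median-concentration of a nonnegative variable $X$ to the product-form bound $\bbP(X=0)\,\bbP(X>t)\le 2e^{-ct^2}$. The intended proof therefore takes $F=d(\cdot,A)$, observes that $F$ is convex and $1$-Lipschitz precisely because $A$ is convex, applies Ledoux's corollary on $[-K,K]^m$ (after rescaling by $2K$) to concentrate $F$ around its median, and then invokes CTT Lemma~3.3 to remove the median. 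Your route bypasses the median entirely, which is fine in itself.

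The genuine gap is your step~(3). The inequality $d(\eta,A)\le 2K\,d_c(\eta,A)$ for convex $A\subset[-K,K]^m$ (note the constant is $2K$, not $K$, since coordinates differ by at most $2K$) is a real lemma and does not follow from ``the supremum defining $d_c$ can be taken over the unit ball''; that observation points in the wrong direction. What is needed is the convex-hull characterization of the Talagrand distance: $d_c(x,A)$ is the Euclidean distance from $0$ to $\mathrm{conv}\{(\ind_{\{x_i\neq y_i\}})_i : y\in A\}$, achieved by some convex combination $\sum_k\lambda_k(\ind_{\{x_i\neq y_i^{(k)}\}})_i$ with $y^{(k)}\in A$. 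Setting $\bar y=\sum_k\lambda_k y^{(k)}$, convexity of $A$ ensures $\bar y\in A$, and $|x_i-\bar y_i|\le\sum_k\lambda_k|x_i-y^{(k)}_i|\le 2K\sum_k\lambda_k\ind_{\{x_i\neq y_i^{(k)}\}}$, so $d(x,A)\le\|x-\bar y\|_2\le 2K\,d_c(x,A)$. Since this is exactly the point at which the convexity hypothesis on $A$ enters your argument, it cannot be waved away as ``purely expository''; without it, step~(3) is an assertion of what you want to be true rather than a proof.
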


We do not use the result above directly but as a tool to obtain a finer concentration result which is valid for function whose Lipschitz norm 
is controlled only a small set. It is a convex version of \cite[Proposition 1.6]{cf:Ledoux} (we refer to \cite{cf:CTT} for the details). 
If $f$ is a function of $\eta$ we let  $|\nabla f (\eta)|$ denote the  Euclidean norm of the gradient of $f$, 
\begin{equation}
|\nabla f (\eta) | =  \sqrt{ \sum^m_{i=1} \left(\frac{\partial f}{\partial \eta_{i}}(\eta)\right)^2}.
\end{equation}

\begin{proposition}[\cite{cf:CTT}, Proposition 3.4]
\label{thm:lapropo}
 Let $f$ be a convex function, and $\eta$ satisfy \eqref{boundedness}.
 Then for any $a$, $c$ and $t$ we have 
 \begin{equation}
  \bbP \Big( f(\eta) \geq a \, ; \, | \nabla f (\go) | \leq M \Big) \bbP\left( f(\eta)\le a-t \right)\le 2e^{-\frac{t^2}{ C_1 K^2 M^2}}
 \end{equation}
where the constant $C_1$ is identical to that of Lemma \ref{lem:concconv}.
 \end{proposition}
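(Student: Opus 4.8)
The plan is to deduce Proposition~\ref{thm:lapropo} from Proposition~\ref{lem:concconv} by the standard device of applying the convex concentration inequality not to $f$ itself but to a suitable ``convexified truncation'' of $f$ whose global Lipschitz constant is controlled by $M$, at the cost of only altering $f$ off the set where $|\nabla f|\le M$. First I would observe that it suffices to produce a convex function $\tilde f$ with $\tilde f \le f$ everywhere, $\tilde f = f$ on the sub-level considerations we care about, and global Lipschitz norm at most $M$ (in the Euclidean sense). Then on one hand $\{\tilde f(\eta)\ge a\}\supseteq \{f(\eta)\ge a\,;\,|\nabla f(\eta)|\le M\}$ provided $\tilde f$ agrees with $f$ near such points, and on the other $\{\tilde f(\eta)\le a-t\}\supseteq\{f(\eta)\le a-t\}$ because $\tilde f\le f$; combining these two inclusions reduces the claimed inequality to the same inequality for $\tilde f$, which is globally $M$-Lipschitz and convex, and there one applies Lemma~\ref{lem:concconv} with the convex set $A:=\{\tilde f\le a-t\}$, using $d(\eta,A)\ge (\tilde f(\eta)-(a-t))/M$ whenever $\tilde f(\eta)>a-t$, i.e.\ on the event $\{\tilde f(\eta)\ge a\}$ one gets $d(\eta,A)\ge t/M$.

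The concrete construction of $\tilde f$ is the inf-convolution
\[
\tilde f(\eta):=\inf_{\zeta\in\bbR^m}\big( f(\zeta) + M\,\|\eta-\zeta\|_2 \big),
\]
the Moreau--Yosida-type regularization with a linear (rather than quadratic) kernel. This is the classical ``Pasch--Hausdorff envelope'': it is automatically convex (an infimum of the jointly convex map $(\eta,\zeta)\mapsto f(\zeta)+M\|\eta-\zeta\|_2$ over $\zeta$), it is globally $M$-Lipschitz, it satisfies $\tilde f\le f$ (take $\zeta=\eta$), and crucially $\tilde f(\eta)=f(\eta)$ at any point $\eta$ where $f$ is ``locally $M$-Lipschitz from below'', in particular wherever $|\nabla f(\eta)|\le M$ and $f$ is convex — because then the supporting hyperplane to $f$ at $\eta$ has slope $\le M$, so $f(\zeta)\ge f(\eta)+\langle \nabla f(\eta),\zeta-\eta\rangle \ge f(\eta)-M\|\zeta-\eta\|_2$, forcing the infimum to be attained at $\zeta=\eta$. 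So on the event $\{f(\eta)\ge a\,;\,|\nabla f(\eta)|\le M\}$ we indeed have $\tilde f(\eta)=f(\eta)\ge a$.

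Assembling: Lemma~\ref{lem:concconv} applied to the convex set $A=\{\tilde f\le a-t\}$ gives $\bbP(\tilde f(\eta)\le a-t)\,\bbP(d(\eta,A)>s)\le 2e^{-s^2/(C_1K^2)}$ for every $s$; taking $s=t/M$ and using $d(\eta,A)\ge t/M$ on $\{\tilde f(\eta)\ge a\}$ yields
\[
\bbP\big(\tilde f(\eta)\ge a\big)\,\bbP\big(\tilde f(\eta)\le a-t\big)\le 2e^{-t^2/(C_1K^2M^2)}.
\]
Replacing the first factor by the smaller $\bbP(f(\eta)\ge a\,;\,|\nabla f(\eta)|\le M)$ via the inclusion above, and the second by the larger-event comparison $\{f\le a-t\}\subseteq\{\tilde f\le a-t\}$ which only helps since $\tilde f\le f$ means $\bbP(\tilde f\le a-t)\ge\bbP(f\le a-t)$ — wait, that goes the wrong way, so instead I keep $\bbP(\tilde f(\eta)\le a-t)$ and note $\{f(\eta)\le a-t\}\subseteq\{\tilde f(\eta)\le a-t\}$ gives $\bbP(\tilde f(\eta)\le a-t)\ge \bbP(f(\eta)\le a-t)$, hence $\bbP(f(\eta)\le a-t)\le \bbP(\tilde f(\eta)\le a-t)$ and the product bound transfers directly. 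This finishes the proof.

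The main obstacle I expect is purely a matter of care rather than depth: verifying that $\tilde f=f$ exactly on $\{|\nabla f|\le M\}$ when $f$ is convex but only assumed (sub)differentiable in a weak sense — one should phrase this via subgradients, $\partial f(\eta)\ne\emptyset$ with $\sup_{v\in\partial f(\eta)}\|v\|_2\le M$, which is the honest reading of ``$|\nabla f(\eta)|\le M$'' for convex $f$, and then the supporting-hyperplane inequality is immediate. A secondary point is making sure the Euclidean norm used in $d(\eta,A)$ matches the one in the Lipschitz bound and in Lemma~\ref{lem:concconv}, which it does by our choice of $\|\cdot\|_2$ in the inf-convolution. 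Everything else is a one-line chaining of set inclusions.
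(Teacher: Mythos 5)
The paper does not give its own proof of this proposition: it is stated verbatim as a citation to \cite{cf:CTT}, Proposition 3.4 (the surrounding text merely remarks that it is ``a convex version of [Ledoux, Prop.~1.6]'' and defers to \cite{cf:CTT} for details). So there is no in-paper argument to compare against, and your proposal has to be judged on its own terms.

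Your argument is correct, and the Pasch--Hausdorff inf-convolution is a legitimate way to get there, but it is heavier than what the convexity hypothesis actually requires. Since $f$ is convex you can skip the construction of $\tilde f$ entirely: take $A := \{\zeta : f(\zeta) \le a-t\}$, which is already convex. If $f(\eta)\ge a$ and there is $v\in\partial f(\eta)$ with $\|v\|_2\le M$ (the correct reading of $|\nabla f(\eta)|\le M$, as you note), then for every $\zeta\in A$ the subgradient inequality gives $a-t \ge f(\zeta) \ge f(\eta) + \langle v,\zeta-\eta\rangle \ge a - M\|\zeta-\eta\|_2$, hence $\|\zeta-\eta\|_2 \ge t/M$ and so $d(\eta,A)\ge t/M$. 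This produces exactly the inclusion $\{f\ge a;\,|\nabla f|\le M\}\subset\{d(\cdot,A)\ge t/M\}$ that you get via $\tilde f$, after which Lemma~\ref{lem:concconv} applied to $A$ finishes. The inf-convolution is what you would reach for if $f$ were \emph{not} convex and you needed a convex, globally $M$-Lipschitz minorant that agrees with $f$ on the good event; here it is an unnecessary layer. Two minor points you should tidy if you keep your version: the passage from $\{d(\eta,A)\ge t/M\}$ to the strict inequality $d(\eta,A)>s$ appearing in Lemma~\ref{lem:concconv} needs the usual ``apply with $s<t/M$ and let $s\uparrow t/M$'' remark (you silently use $\ge$); and the mid-proof ``wait, that goes the wrong way'' detour, while it lands on the right conclusion that $\bbP(f\le a-t)\le\bbP(\tilde f\le a-t)$, should be cleaned out of a final write-up.
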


We want to apply this result to $\log  \hat Z^{\gb,\go}_{N_{\gb,\gep}}$, which is a convex function of $(\go_{n,x})_{1\le n\le N, |x|\le N}$.
To use Proposition \ref{thm:lapropo} efficiently we have to obtain a good bound on the norm of its gradient
\begin{equation}
\left|\nabla \log  \hat Z^{\gb,\go}_{N_{\gb,\gep}}\right|:=   \sqrt{ \sum^{N}_{n=1} \sum_{|x|\le n}
\left(\frac{\partial }{\partial \go_{x,n}}\log  \hat Z^{\gb,\go}_{N_{\gb,\gep}}\right)^2}\, .
\end{equation}
We prove the following in Section \ref{sec:secondmoment}.
\begin{lemma} \label{lem:keystatement}
 There exists positive constants $\gb_\gep$ and $M=M_{\gep}$ such that for all $\gb<\gb_{\gep}$ we have 
\begin{equation}
\label{concentre}
\bbP \Big(\hat Z^{\gb,\go}_{N_{\gb,\gep}} \ge 1/2  \, ; \, \left|\nabla \log  \hat Z^{\gb,\go}_{N_{\gb,\gep}}\right| \leq M \Big) \geq \gep/80 \, .
\end{equation}
 \end{lemma}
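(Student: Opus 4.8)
\textbf{Plan of proof of Lemma~\ref{lem:keystatement}.}
The plan is to prove the two events in \eqref{concentre} are each individually very likely (probability close to $1$, and at least some $\gep$-order constant), so that their intersection still has probability at least $\gep/80$. The first event, $\hat Z^{\gb,\go}_{N_{\gb,\gep}}\ge 1/2$, will be controlled by a second-moment (Paley--Zygmund) argument: since $\bbE[\hat Z^{\gb,\go}_N]=1$ for all $N$, it suffices to show that $\bbE[(\hat Z^{\gb,\go}_{N_{\gb,\gep}})^2]$ is bounded above by a constant depending only on $\gep$. This is exactly the content of the second-moment estimate announced as \eqref{secondmoment} in Section~\ref{sec:secondmoment}: expanding the square of the partition function over two independent replicas $S^{(1)},S^{(2)}$ produces a series in the intersection local time, $\bbE[(\hat Z^{\gb,\go}_N)^2]=\bE^{\otimes 2}[\exp((\gl(2\gb)-2\gl(\gb))L_N(S^{(1)},S^{(2)}))]$, and since $\gl(2\gb)-2\gl(\gb)\sim\gb^2$ as $\gb\to 0$ while $D(N_{\gb,\gep})\sim\frac{1}{\pi}\log N_{\gb,\gep}=(1-\gep)\gb^{-2}$, the effective parameter $\gb^2 D(N_{\gb,\gep})\to 1-\gep<1$ keeps the renewal-type series summable, uniformly in small $\gb$. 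Then Paley--Zygmund gives $\bbP(\hat Z^{\gb,\go}_{N_{\gb,\gep}}\ge 1/2)\ge \bbP(\hat Z\ge \tfrac12\bbE[\hat Z])\ge \tfrac14\,\bbE[\hat Z]^2/\bbE[\hat Z^2]\ge c(\gep)$, and one checks $c(\gep)$ can be taken comfortably larger than $\gep/40$ for $\gep$ small.

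The second event requires bounding the probability that the gradient norm $|\nabla \log \hat Z^{\gb,\go}_{N_{\gb,\gep}}|$ exceeds a threshold $M=M_\gep$. The key computation is that $\frac{\partial}{\partial\go_{n,x}}\log \hat Z^{\gb,\go}_N=\gb\,\bP_N^{\gb,\go}(S_n=x)-\gb\gl'(\gb)\approx \gb(\bP_N^{\gb,\go}(S_n=x))$ up to a deterministic shift, so that
\[
\bigl|\nabla \log \hat Z^{\gb,\go}_{N_{\gb,\gep}}\bigr|^2
\le 2\gb^2\sum_{n=1}^{N}\sum_{x}\bP_N^{\gb,\go}(S_n=x)^2 + (\text{deterministic error}),
\]
and $\sum_x \bP_N^{\gb,\go}(S_n=x)^2=(\bP_N^{\gb,\go})^{\otimes 2}(S_n^{(1)}=S_n^{(2)})$ is exactly the replica-overlap density at time $n$. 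Taking expectations, $\bbE\bigl[\sum_{n\le N}(\bP_N^{\gb,\go})^{\otimes 2}(S_n^{(1)}=S_n^{(2)})\bigr]$ is, up to the partition-function normalization and a size-biasing, comparable to $\bbE[(\hat Z^{\gb,\go}_N)^2 L_N(S^{(1)},S^{(2)})]/\bbE[(\hat Z^{\gb,\go}_N)^2]$, which by the same renewal series as above is $O(D(N_{\gb,\gep}))=O(\gb^{-2})$. Hence $\bbE[|\nabla \log \hat Z^{\gb,\go}_{N_{\gb,\gep}}|^2]=O(1)$ with an $\gep$-dependent constant, and Markov's inequality gives $\bbP(|\nabla \log \hat Z^{\gb,\go}_{N_{\gb,\gep}}|\le M_\gep)\ge 1-\gep/80$ for $M_\gep$ large enough.

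Combining the two bounds, $\bbP(\hat Z\ge 1/2\,;\,|\nabla\log\hat Z|\le M)\ge \bbP(\hat Z\ge 1/2)-\bbP(|\nabla\log\hat Z|>M)\ge c(\gep)-\gep/80\ge \gep/80$ for $\gep$ small, which is the claim. The main obstacle is the gradient moment estimate: one must carefully handle the normalization in $\frac{\partial}{\partial\go_{n,x}}\log\hat Z^{\gb,\go}_N$ and argue that the relevant quantity $\bbE[\sum_n\sum_x(\partial_{\go_{n,x}}\log\hat Z)^2]$ is genuinely $O(\gb^2 D(N_{\gb,\gep}))=O(1)$ and not larger — this is where the precise choice $N_{\gb,\gep}=\exp((1-\gep)\pi\gb^{-2})$ enters, since a slightly larger $N$ would push $\gb^2 D(N)$ past $1$ and break the uniform control of the second moment and hence of the overlap. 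The bounded-$\go$ reduction mentioned before Lemma~\ref{lem:concconv} also needs to be threaded through: one first proves everything for truncated $\go$ and then controls the truncation error, but this is routine given the exponential moment assumption \eqref{expomom}.
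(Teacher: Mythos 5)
Your overall architecture is right and matches the paper: Paley--Zygmund from the second-moment estimate \eqref{secondmoment} handles the event $\{\hat Z^{\gb,\go}_{N_{\gb,\gep}}\ge 1/2\}$, and a moment bound on the gradient handles the other event. But there is a genuine gap in how you propose to bound the gradient. You want to show $\bbE\big[|\nabla\log\hat Z^{\gb,\go}_{N_{\gb,\gep}}|^2\big]=O(1)$ and then apply Markov to $\bbP(|\nabla\log\hat Z|>M)$ \emph{unconditionally}. The trouble is that
\[
|\nabla\log\hat Z^{\gb,\go}_N|^2=\frac{\gb^2}{(\hat Z^{\gb,\go}_N)^2}\,\bE^{\otimes 2}\!\left[L_N(S^{(1)},S^{(2)})\,e^{\sum_n\gb(\go_{n,S^{(1)}_n}+\go_{n,S^{(2)}_n})-2\gl(\gb)}\right],
\]
which carries a $(\hat Z^{\gb,\go}_N)^{-2}$ factor, and the negative moments of the partition function are not under control in this regime. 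Your heuristic that this quantity is ``comparable to $\bbE[(\hat Z)^2 L_N]/\bbE[(\hat Z)^2]$'' is not a size-biasing identity that actually holds; the random normalization $(\hat Z)^{-2}$ does not simply factor out, and on the event where $\hat Z$ is small the integrand can be large. The paper sidesteps this precisely by \emph{not} separating the two events: it writes
\[
\bbP\big(\hat Z\ge \tfrac12;\,|\nabla f|\le M\big)=\bbP\big(\hat Z\ge \tfrac12\big)-\bbP\big(\hat Z\ge \tfrac12;\,|\nabla f|> M\big)\ge \tfrac{\gep}{40}-\tfrac{1}{M^2}\,\bbE\big[|\nabla f|^2\,\ind_{\{\hat Z\ge 1/2\}}\big],
\]
and on the event $\{\hat Z\ge 1/2\}$ one has $(\hat Z)^{-2}\le 4$, which removes the offending factor. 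The restricted expectation then reduces, exactly as in the second-moment computation, to $4\gb^2\,\bE^{\otimes 2}[L_N\exp(\gamma(\gb)L_N)]$, which is uniformly bounded for $\gb\le\gb_\gep$ by the same renewal/pinning estimate (this is where the choice $N_{\gb,\gep}$ with $\gb^2 D(N_{\gb,\gep})\to 1-\gep<1$ enters, as you correctly identify). So the missing idea is the restriction of the gradient-moment bound to the good event $\{\hat Z\ge 1/2\}$; without it, the Markov step is not justified.
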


In the case where the environment $\go$ satisfies $|\go_{n,x}|\le K$ almost surely, we can use Proposition~\ref{thm:lapropo} directly with $a=-\log 2$, combined with Lemma \ref{lem:keystatement}. 
This yields 
\begin{equation}
 \bbP\left( \log \hat Z^{\gb,\go}_{N_{\gb,\gep}} \le -\log 2 -t \right) \le \frac{160}{\gep}e^{-\frac{t^2}{ C_1 K^2 M^2}},
\end{equation}
and hence that 
\begin{equation}
 \Delta\tf(\gb)\le - \frac{1}{N_{\gb,\gep}}\bbE \left[ \log \hat Z^{\gb,\go}_{N_{\gb,\gep}} \right]\le  \frac{C(\gep,M,K)}{N_{\gb,\gep}}.
\end{equation}

\smallskip
In the case where the environment is unbounded, we can deduce from \eqref{expomom}  that there exists $c_0<c$ such that
\begin{equation}
\forall v>0,\quad \bbP(|\go_1|\geq v) \leq 2 e^{-c_0 v},
\end{equation}
and hence
\begin{equation}
 \bbP \Big( \max_{n\le N, |x|\le n} |\go_{x,n}| \ge v \Big)\le 8 N^3 e^{-c_0 v}.
\end{equation}
From this we deduce two bounds. The first one is rough, but valid for any value of $\gb$ and $N$, and we use it in desperate cases
\begin{equation}\label{eq:desperate}
 \bbP\left[ \log \hat Z^{\gb,\go}_{N} \le -\left(\gb v+\gl(\gb)\right) N \right] \le   8 N^3 e^{-c_0 v}.
\end{equation}
The other one makes use of Proposition \ref{thm:lapropo}, that we apply to
\begin{equation}
 \tilde \go \mapsto f(\tilde \go):= \log  \bE\left[ \exp\left( \sum_{n=1}^N \gb \tilde \go_{n,S_n}-\gl(\gb)\right)\right]\, ,
\end{equation}
where
$\tilde \go_{x,n}= \go_{x,n}\ind_{\{|\go_{x,n}|\le (\log N)^2\}}$.
For $\gb$ small enough, Lemma \ref{lem:keystatement} gives that
\begin{multline}
 \bbP\big(  f(\tilde \go)\ge -\log 2,  \left|\nabla f(\tilde \go)\right| \leq M \big) \\ \ge 
 \bbP \Big(\hat Z^{\gb,\go}_{N_{\gb,\gep}} \ge 1/2  \, ; \, \left|\nabla \log  \hat Z^{\gb,\go}_{N_{\gb,\gep}}\right| \leq M \Big)
 - \bbP(\go \ne \tilde \go)\ge \frac{\gep}{160},
\end{multline}
where in $\bbP(\go \ne \tilde \go)$ we implicitly considered environments restricted to 
$n\in [1,N]$, $|x|\le N$, and in the last inequality we used that
$$ \bbP ( \go \ne \tilde \go )\le 8  N^3 e^{-c_0 (\log N)^2}.$$

Therefore, applying Proposition \eqref{thm:lapropo} to the function $f(\tilde \go)$,
we finally obtain 
 \begin{multline}\label{eq:notbounded}
 \bbP\left( \log \hat Z^{\gb,\go}_{N_{\gb,\gep}} \le -\log 2 -t \right) \\
 \le
  \bbP\left( f(\tilde \go) \le -\log 2 -t \right)+ \bbP(\go \ne \tilde \go)
  \le \frac{320}{\gep}e^{-\frac{t^2}{ C_1 (\log N)^4 M^2}}+
8  N^3 e^{-c_0 (\log N)^2}.
\end{multline}
\smallskip

In the end, combining \eqref{eq:notbounded} for ``small'' values of $t$ (e.g.\ $t\le N^2$) and \eqref{eq:desperate} for all other values, we conclude that 
\begin{equation}
  \Delta\tf(\gb)\le - \frac{1}{N_{\gb,\gep}}\bbE \left[ \log \hat Z^{\gb,\go}_{N_{\gb,\gep}} \right]\le \frac{C(\log N_{\gb,\gep})^2}{N_{\gb,\gep}} \, .
\end{equation}
This yields the result thanks to the definition of $N_{\gb,\gep}$.

\subsection{Second moment estimate, proof of Lemma \ref{lem:keystatement}}
\label{sec:secondmoment}

Let us set $\gamma(\gb):=\gl(2\gb)-2\gl(\gb)$, and note that 
\begin{equation}
\label{computesecond}
\bbE\left[ \left(\hat Z^{\gb,\go}_N\right)^2\right]=\bE^{\otimes 2} \left[ \exp\left( \gamma(\gb) \sum_{n=1}^N \ind_{\{S_n^{(1)} = S_n^{(2)}\}}\right)\right].
\end{equation}
Using \cite[Lemma 6.4]{cf:BL15}, for $\gep>0$ and $\gb$ sufficiently small, and choosing $N=N_{\gb,\gep}:=\exp\left((1-\gep)\pi \gb^{-2}\right)$, we have that
\begin{equation}
\label{secondmoment}
 \bbE\left[ \left(\hat Z^{\gb,\go}_{N_{\gb,\gep}}\right)^2\right]\le \frac{10}{\gep}.
\end{equation}
Therefore, thanks to the Paley-Zygmund inequality, we obtain  
\begin{equation}\label{okidoki}
 \bbP\left[ \hat Z^{\gb,\go}_{N_{\gb,\gep}}\ge 1/2\right]\ge \frac{1}{4 \bbE\left[ \left(\hat Z^{\gb,\go}_{N_{\gb,\gep}}\right)^2\right]}\ge \frac{\gep}{40}. 
\end{equation}

\smallskip
For notational simplicity let us write $f(\go):= \log \hat Z^{\gb,\go}_{N_{\gb,\gep}}.$
With \eqref{okidoki}, we have
\begin{align}
\bbP \Big( \hat Z^{\gb,\go}_{N_{\gb,\gep}}\ge \frac12 \, ; \, | \nabla f (\go) | \leq M \Big) & =  \bbP \Big( \hat Z^{\gb,\go}_{N_{\gb,\gep}}\ge \frac12 \Big) -  \bbP \Big( \hat Z^{\gb,\go}_{N_{\gb,\gep}}\ge \frac12 \, ; \, | \nabla f (\go) | > M \Big) \notag\\
\label{pourconcentre} &\geq \frac{\gep}{40} - \frac{1}{M^2} \bbE \Big[  | \nabla f (\go) |^2 \, \ind_{\{\hat Z^{\gb,\go}_{N_{\gb,\gep}} \geq 1/2\}}  \Big].
\end{align}
Then, a straightforward calculation gives
 \begin{equation}
 \label{gradient}
|\nabla f(\go)|^2 = \frac{\gb^{2}}{(\hat Z^{\gb,\go}_{N_{\gb,\gep}})^2} \bE^{\otimes 2}\left[ \sum_{n=1}^{N_{\gb,\gep}} \ind_{\{S_n^{(1)} = S_n^{(2)}\}} \, \exp\left( \sum_{n=1}^{N_{\gb,\gep}} \left( \gb ( \go_{n,S_n^{(1)}} + \go_{n,S_n^{(2)}})-2\gl(\gb)  \right) \right) \right] \, ,
 \end{equation}
so that, similarly to \eqref{computesecond}, we get
\begin{equation}
\bbE \Big[  | \nabla f (\go) |^2 \, \ind_{\{\hat Z^{\gb,\go}_{N_{\gb,\gep}} \geq 1/2\}}  \Big] \leq 4 \bE^{\otimes 2} \left[ \gb^2 \Big( \sum_{n=1}^{N_{\gb,\gep}} \ind_{\{S_n^{(1)} = S_n^{(2)}\}} \Big)  \exp\left( \gamma(\gb) \sum_{n=1}^{N_{\gb,\gep}} \ind_{\{S_n^{(1)} = S_n^{(2)}\}}\right)\right]\, .
\end{equation}

It is now standard to show that this last term is uniformly bounded for $\gb\leq \gb_{\gep}$, as done for example in \cite[\S~6.3]{cf:BL15}.
First, notice that $\gamma(\gb)\sim \gb^2$ as $\gb\downarrow 0$. Therefore if $\gb$ is small enough, we have that $\gamma(\gb)\leq (1+\gep^2/2) \gb^2$, and there exists a constant $C_{\gep}>0$ such that for all $\gb\leq \gb_{\gep}$ and all $N\geq 1$

\[\gb^2 \Big( \sum_{n=1}^{N} \ind_{\{S_n^{(1)} = S_n^{(2)}\}} \Big)  \exp\left( \gamma(\gb) \sum_{n=1}^{N} \ind_{\{S_n^{(1)} = S_n^{(2)}\}}\right)\leq C_{\gep} \exp\left( (1+\gep^2)\gb^2 \sum_{n=1}^{N} \ind_{\{S_n^{(1)} = S_n^{(2)}\}}\right). \]

Hence, exactly as in Section 6.3 of \cite{cf:BL15}, the term we need to bound is $\mathbf{Z}_{N}^{u}$, the partition function of a homogeneous pinning model with parameter $u=(1+\gep^2)\gb^2$ and underlying renewal $\tau =\{n \, ;\, S_n^{(1)}= S_n^{(2)}\}$. Referring to \cite{cf:BL15} (in particular Equations (6.24)-(6.31)), we have that $\mathbf{Z}_{N_{\gb,\gep}}^{u} \leq 10/\gep$ if $\gb$ is small enough, and we get that
\[\bbE \Big[  | \nabla f (\go) |^2 \, \ind_{\{\hat Z^{\gb,\go}_{N_{\gb,\gep}} \geq 1/2\}}  \Big] \leq 40\,  C_{\gep} /\gep \, . \]
In the end, choosing $M= \frac{80 \sqrt{C_{\gep}}}{\gep}$ in \eqref{pourconcentre} yields \eqref{concentre}.

\bigskip

\textbf{Acknowledgments:} The authors are grateful to Francesco Caravenna, Rongfeng Sun and Nikos Zygouras for communicating their result \cite{cf:CSZ3}, and thank in particular Francesco Caravenna for pointing out the techniques used in \cite{cf:CTT}.

\bibliographystyle{plain}

\end{document}